\newtheorem{theorem}{Theorem}%[section]
\newtheorem{coro}{Corollary}
\newtheorem{lem}{Lemma}
\newtheorem{definition}{Definition}
\newtheorem{remark}{Remark}
\newtheorem{counter}{Counter Example}
\newtheorem{proposition}[theorem]{Proposition}
\def\Xint#1{\mathchoice
{\XXint\displaystyle\textstyle{#1}}%
{\XXint\textstyle\scriptstyle{#1}}%
{\XXint\scriptstyle\scriptscriptstyle{#1}}%
{\XXint\scriptscriptstyle\scriptscriptstyle{#1}}%
\!\int}
\def\XXint#1#2#3{{\setbox0=\hbox{$#1{#2#3}{\int}$ }
\vcenter{\hbox{$#2#3$ }}\kern-.6\wd0}}
\def\dashint{\Xint-}
\def\({\left(}
\def\){\right)}
\def\1{\mathbf{1}}
\def\dt0{{{\frac{d}{dt}}_{|t=0}}}
\def\l|{\left|}
\def\r|{\right|}
\begin{document}
\title{Uniqueness results for critical points of a non-local isoperimetric problem via curve shortening}
\author{Dorian Goldman\thanks{Courant Institute of Mathematical Sciences/Universit\'{e} Paris 6 Pierre et Marie Curie, dgoldman@cims.nyu.edu}}
\maketitle

\begin{abstract}
Using area-preserving curve shortening flow, and a new inequality relating the potential generated by a set to its curvature, we study a non-local isoperimetric problem which arises in the study of di-block copolymer melts, also
referred to as the Ohta-Kawasaki energy. We are able to show that the only connected critical point is the ball under mild assumptions on the boundary, in the small energy/mass regime. In particular this class includes all rectifiable, connected 1-manifolds in $\mathbb{R}^2$. We also classify the simply connected critical points on the torus in this regime,
showing the only possibilities are the stripe pattern and the ball.  In $\mathbb{R}^2$, this can be seen as a partial union of the well known
result of Fraenkel \cite{Fraenkel} for uniqueness of critical points to the Newtonian Potential energy, and Alexandrov for the perimeter functional \cite{alexandrov}, however restricted to the plane. 
 The proof of the result in $\mathbb{R}^2$ is analogous to the curve shortening result due to Gage \cite{Gage2}, but
involving a non-local perimeter functional, as we show the energy of convex sets strictly decreases along the flow.
Using the same techniques we obtain a stability result for minimizers in $\mathbb{R}^2$ and for the stripe pattern on the torus, the latter of which was recently shown to be the global minimizer to the energy
when the non-locality is sufficiently small \cite{sternberg}.
\end{abstract}

\section{Introduction} 

The classical isoperimetric problem has been thoroughly studied, and it is well known since the work of De Giorgi \cite{degiorgi} that
the unique optimizer to this problem in $\mathbb{R}^d$ is the ball. There has recently been significant interest in the effects of
adding a repulsive term to the classic perimeter which favors separation of mass. An example of such an energy is often referred
to as the Ohta-Kawasaki energy, first introduced in \cite{ohta86}, and takes the following form 
\begin{equation}\label{energy3}
 E[u] := \int_{U} |\nabla u|+ \gamma \int_{U}\int_{U} (u(x)-\bar u)G(x,y) (u(y)-\bar u) dxdy,
\end{equation}
where $u \in BV(U ;\{0,+1\})$ and 
\begin{equation}\label{ubardef}
 \bar u = \left\{\begin{array}{ccc}
\dashint_{U} u(x) dx &\mbox{for}&U \textrm{ bounded and open }\\
0&\mbox{for}&U=\mathbb{R}^d.\\
\end{array}\right.
\end{equation}

Here $BV(U;\{0,+1\})$ denotes the space of functions of bounded variation taking values $0$ and $+1$ in $U$ (see \cite{ambrosio} for an introduction to the space $BV$). The kernel $G$ 
is generally taken to be the kernel of the Laplacian operator, with periodic boundary conditions when $U=\mathbb{T}^d$.  The parameter $\gamma > 0$ describes the strength of the non-local term. It is clear that the non-local
term favors the separation of mass while the perimeter favors clustering. The above problem describes a number of polymer systems \cite{degennes79, nagaev95,ren00} as well
as many other physical systems \cite{chen93,emery93,glotzer95,lundqvist,kohn07iciam,nagaev95} due to the fundamental nature of the Coulombic term. Despite
the abundance of physical systems for which \eqref{energy3} is applicable, rigorous mathematical analysis is fairly recent \cite{alberti09,choksi01,choksi08,choksi04,choksi11siads,choksi10,choksi11,gms11b,gms10b,m:phd,m:pre02,m:cmp10,muratovknup1,muratovknup2}.

\noindent One considers minimizing \eqref{energy3} over the class  \begin{align}\label{Energy1}
\mathcal{A}_m &:= \{ u \in BV(U;\{0,+1\}) : \int_U u \;dx = m\}, \;\;\;\;\;u := \chi_{\Omega}.
\end{align}
The variational problem \eqref{energy3} when $U=\mathbb{T}^2$ has been studied in the limit $\gamma \to +\infty$ when the phase $\{u=0\}$ dominates the $\{u=+1\}$ \cite{choksi10,choksi11,gms10b,gms11b,m:cmp10}. The
minimizers in this case form droplets of the minority phase $\{u=+1\}$ in the majority phase $\{u=0\}$ and each connected component of $\{u=+1\}$ wishes to minimize the energy
\begin{align}\label{energy4}
E_{\mathbb{R}^d} [u]= \int_{\mathbb{R}^d} |\nabla u|+ \int_{\mathbb{R}^d}\int_{\mathbb{R}^d}u(x)G(x,y)u(y) dxdy,
\end{align}
over $u \in \mathcal{A}_m$, where $G$ is a kernel on $\mathbb{R}^d$ to be explicited below and $m$ is $O(1)$ as $\gamma \to +\infty$. When we refer to \eqref{energy3} with $U=\mathbb{R}^d$ we will
mean \eqref{energy4} throughout the paper. Observe that we no longer have the constant $\gamma$ in front of the non-local term in this case, as one can rescale the the domain $\Omega$ to make $\gamma \equiv 1$ for the class of $G$ we consider (see \eqref{kerndef1} below), changing
the energy by at most a constant. Indeed when $\gamma \neq 1$, by a change of
variables $x' = \lambda x$, $y'=\lambda y$ and setting $\tilde u(x) = u(x/\lambda)$ we can write \eqref{energy3} as
\begin{align}\label{energy4aa}
E [u]= \lambda \(\int_{\mathbb{R}^d} |\nabla \tilde u|+ \gamma \lambda^{3-p} \int_{\mathbb{R}^d}\int_{\mathbb{R}^d}\tilde u(x)G(x,y)\tilde u(y) dxdy + C(\lambda)\),
\end{align}
 where $C(\lambda)>0$ is a constant which arises when $G$ is the logarithmic kernel, and $p > 0$ arises from the singularity of the kernel. Setting $\lambda^{3-p}\gamma=1$ yields \eqref{energy4} after dividing by $\lambda$ and subtracting the constant term. We will often abuse terminology and say $\Omega \in \mathcal{A}_m$ when we mean $\chi_{\Omega} \in \mathcal{A}_m$ and write $E(\Omega)$ when we mean $E(\chi_{\Omega})$. For
the remainder of the paper we will study \eqref{energy3} exclusively  with $U=\mathbb{T}^2$ and $U=\mathbb{R}^2$, except for comments regarding
results in $\mathbb{R}^d$ for $d \geq 3$. 

\medskip

 When $U=\mathbb{R}^2$, in order for minimizers to exist in $\mathcal{A}_m$ to \eqref{energy3}, it is necessary to modify the logarithmic kernel. The above energy
with $G$ replaced by the kernel 
\begin{equation}\label{Kkern}
 K(x) = \frac{1}{|x|^{\alpha}} \textrm{ for } \alpha \in (0,2),
\end{equation}
was recently studied by Knupfer and Muratov \cite{muratovknup1,muratovknup2}. 
A simple rescaling shows that for small masses, the effect
of the non-local term is small compared to that of the perimeter. It is therefore reasonable to expect that for small masses, the unique
minimizer to \eqref{energy3} is the ball. This was shown by Knupfer and Muratov \cite{muratovknup1, muratovknup2} in dimensions
$2 \leq d \leq 8$. Moreover they show that for sufficiently large masses, for $2 \leq d \leq 8$, minimizers of \eqref{energy3} fail to exist, as it is favorable for mass to split.\\

\medskip

There is also much interest in critical points to \eqref{energy3} which are not necessarily locally minimizing. Here a critical point to \eqref{energy3}
is a set $\Omega \in \mathcal{A}_m$ for which the first variation with respect to volume preserving diffeomorphisms vanish. 
\begin{definition}\label{defcp}
A set $\Omega \in \mathcal{A}_m$ is a \emph{critical point} of \eqref{energy3} if for all volume preserving diffeomorphisms $\phi_t:\Omega \to \phi_t(\Omega) =: \Omega_t$ it holds that
\begin{equation}
\frac{d E(\Omega_t)}{dt}\big|_{t=0} = 0.\end{equation}
\end{definition}

 In particular a simple calculation \cite{choksi12} shows that $\Omega$ is a smooth critical point if and only if it solves the following Euler-Lagrange equation
\begin{equation}\label{ELeqn}
 \kappa(y) + \gamma \phi_{\Omega} (y) = \lambda \textrm{ on } \partial \Omega
\end{equation}
where $\phi_{\Omega}$ is the potential generated by $\Omega$, ie. 
\begin{equation}\label{potdefn1}
 \phi_{\Omega}(y) = \left\{\begin{array}{ccc}
\int_{\Omega} G(x,y) dx &\mbox{for}&U=\mathbb{R}^2\\
\int_{\mathbb{T}^2} G(x,y)(u(x)-\bar u) dx &\mbox{for}&U=\mathbb{T}^2,\\
\end{array}\right.
\end{equation}
$\kappa$ is the curvature of $\partial \Omega$ and $\lambda$ is the Lagrange multiplier arising from the volume constraint. We however
do not wish to assume a-priori regularity of the boundary as critical points may in general not be smooth. An important example demonstrating this
is the coordinate axes in $\mathbb{R}^2$ minus the origin. In this case the generalized mean curvature
is constant on the reduced boundary $\partial^*\Omega = \Omega$, $m=0$ and hence \eqref{ELeqn} is satisfied everywhere on $\partial^* \Omega$ (see \cite{simon} for a reference
on generalized mean curvature and the theory of varifolds). The example of a figure 8 with center at $O$ also shows that while the curvature of a closed curve can be smooth and uniformly bounded on $\partial \Omega \backslash \{O\} = \partial^* \Omega$,
there is not necessarily a natural way to make sense of the curvature or variations of \eqref{energy3} near $O$. In $\mathbb{R}^2$ however we can continue to make sense of the curvature
at $O$ if there exists a parametrization of the boundary by a closed, rectifiable (ie. has finite length) curve. By the results of \cite{ambrosio2}, any connected
set with finite perimeter has a boundary $\partial \Omega$ which can be decomposed into a countable union of Jordan curves $\{\gamma_k\}_k$ with disjoint interiors. In this
case however, as the figure 8 example demonstrates, one cannot make sense of variations near points on the curve which are locally homeomorphic to $[0,1]$, and can thus only expect to extract information
from the Euler-Lagrange information on the reduced boundary. However, when the boundary $\partial \Omega$ can be
decomposed into a countable \emph{disjoint} union of closed, rectifiable curves, there is a natural way to consider variations
of the domain, even on the compliment of the reduced boundary, by considering variations of the curve in the normal direction induced by the parametrization. We thus make the following definition.

\begin{definition} (Admissible curves) \label{addef} A connected set $\Omega$ with finite perimeter will be called \emph{admissible} if its boundary $\partial \Omega$ can be decomposed
into a countable number of closed, disjoint curves $\gamma_k$ each admitting a $W^{1,1}$ parametrization $\gamma_k:[0,1] \to \mathbb{R}^2$ with
$|\gamma_k'(t)|=L_k$ for $t \in [0,1]$. In particular we may write $\gamma_k(t) = \gamma_k(0) + L_k\int_0^t e^{i\theta_k(r)} dr$. \end{definition}

Note that the above class includes all connected, rectifiable 1-manifolds. Indeed any manifold has a boundary which is not locally homeomorphic to $[0,1]$
and thus does not intersect any other segments of the boundary. In particular, each $\gamma_k$ is therefore simple and disjoint from every other $\gamma_j$ for $j \neq k$, and
never intersects itself transversally. Working within the class of admissible curves, we are able to rigorously compute the Euler-Lagrange equation and extract sufficient information from it to conclude
that admissible critical points are convex when $U=\mathbb{R}^2$ and simply connected critical points are convex or star shaped when $U=\mathbb{T}^2$, in the small mass/energy regime. The details
will be presented in Section \ref{weakeleqn}.

\medskip

\noindent When $U=\mathbb{R}^2$, the classification
of critical points corresponding to either the perimeter term or non-local term, considered separately, has been well studied \cite{alexandrov, Fraenkel}. In particular
it is a well known result of Alexandrov \cite{alexandrov} that in dimensions $d \geq 2$ the only simply connected, compact, constant mean curvature surface is the ball. For
the non-local term, Fraenkel showed somewhat recently \cite{Fraenkel} that if $\phi_{\Omega} \equiv $ constant on $\partial \Omega$ then $\Omega$ must be the ball. This was recently extended to general Riesz kernels by Reichel \cite{Reichel}, however restricted to the class of convex sets.
The question then naturally arises of knowing how one can classify the solutions to \eqref{ELeqn}. In $\mathbb{R}^2$ one can easily construct annuli which satisfy \eqref{ELeqn} for particular choices
of radii (see Counter Example \ref{CE1}). Even in $\mathbb{R}^3$  examples of tori and double tori solutions to \eqref{ELeqn} exist  \cite{toroidal}, showing that
compact, connected solutions to \eqref{ELeqn} exist other than the ball. Since a smooth set is a critical point in the sense of Definition \ref{defcp} if and only if it satisfies \ref{ELeqn} (see \cite{choksi12}), this equivalently
shows a lack of uniqueness for critical points of \eqref{energy3}.  We provide a partial answer to this question (see Theorem \ref{main3} below) for a range of values of $(m,E) \in \mathbb{R}^+ \times \mathbb{R}^+$ sufficiently small, by showing that the only connected critical point is the ball in
the class of admissible sets in that range. More precisely, when the parameters \eqref{rescaled1} or \eqref{rescaled2} are sufficiently small. \\

It is easily seen that the only constant curvature surfaces in $\mathbb{T}^2$ are unions of circular arcs and straight lines. In addition, the stripe patterns defined by
\begin{equation}
 u_n(x) = u(nx) \textrm{ for } n \in \mathbb{N},\label{R.5}
\end{equation}
where
\begin{equation}\label{R.6}
u(x)=\left\{\begin{array}{ccc}
1&\mbox{for}&0\le x\le w\\
0&\mbox{for}&w\le x\le 1\\
\end{array}\right\},
\end{equation}
for $w \in (0,1)$ also satisfy $\phi_{\Omega} = $ constant on $\partial S_n$ where $S_n$ is the set corresponding to the indicator function $u_n$. We are also able to classify simply connected solutions to \eqref{ELeqn} in this case, showing that in fact, for a range of $(\gamma,E) \in \mathbb{R}^+ \times \mathbb{R}^+$ sufficiently small, there are no solutions to \eqref{ELeqn} other than $S_{n=1}$ and the ball.
In particular, when $m > \frac{\pi}{4}$, the only possibility is $S_{n=1}$. 
\medskip

 The way that we characterize critical points in both cases is by showing that any set $\Omega$ as described above which
is not a constant curvature surface satisfies
\begin{equation}\label{notcr} \frac{dE(\Omega_t)}{dt}\big|_{t=0} \neq 0,\end{equation}
where $\Omega_t$ is the evolution of $\Omega$ under area-preserving curve shortening flow, which is admissible under Definition \ref{defcp}. Details will follow in Sections \ref{VPMCFsec} and \ref{mainresults}.\\

 \noindent When $U=\mathbb{R}^2$ our main results (Theorems \ref{main3} and \ref{nonexistI} below) will hold
for \begin{equation}\label{kerndef1} G(x,y) = -\frac{1}{2\pi} \log |x-y| \textrm{    or    } G(x,y) = \frac{1}{|x-y|^{\alpha}} \textrm{ when } \alpha \in (0,1),\end{equation} with minor modifications to the proofs. It will
always be made clear below which kernel is being used. When a constant depends on $\alpha$, this will mean exclusively for the kernel $K$ for $\alpha \in (0,1)$. In all such cases
the dependence of the constant on $\alpha$ may be dropped for the logarithmic kernel. \\

 \noindent We begin by defining the following parameters 
\begin{align}\label{rescaled1}
\bar \eta  &:=\left\{\begin{array}{ccc}
 m^{1/2} L^2 (1+|\log L|)&\mbox{for}&G(x,y) = - \frac{1}{2\pi} \log |x-y|,\;\;\; \gamma \equiv 1 \\
 m^{1/2} L^{2-\alpha}&\mbox{ for }&G(x,y) = \frac{1}{|x-y|^{\alpha}} \;\;\; \alpha \in (0,1),\;\;\;  \gamma \equiv 1\\
\end{array}\right.\\
\bar \gamma &:= \gamma m^{1/2} L^2 (1+|\log L|)\label{rescaled2},\;\;\;  m \in (0,1)
\end{align}
where $L=|\partial \Omega|$. Our results will be stated in terms of these rescaled parameters. A simple scaling analysis of \eqref{energy3} reveals  
\begin{equation}
 \left\langle \frac{d (E-L)}{dL}, \zeta \right\rangle = \gamma \frac{ \int_{\partial \Omega} \phi_{\Omega} (y) \zeta(y) dS(y)}{\int_{\partial \Omega} \kappa(y)\zeta(y)dS(y)} \sim \bar \eta , \bar \gamma,
\end{equation}
where $E$ is defined by \eqref{energy3}, $dS$ is surface measure on $\partial \Omega$ and with some abuse of notation  $\left\langle \frac{d (E-L)}{dL}, \zeta \right\rangle$ denotes
the variation in the sense of Definition \ref{defcp} induced by the normal velocity $\zeta:\partial \Omega \to \mathbb{R}$. Thus $\bar \eta $ and $\bar \gamma$ represent the 
rate of change of the non-local term in the energy with respect to the length of the boundary. Our result can thus be stated
formally as saying that when the the change of the non-local term is small compared to a change in length, the critical points
can be classified entirely in terms of those of the length term in \eqref{energy3}, and thus are constant curvature curves. 

 For minimizers we have a natural a priori bound on $L$ coming from the positivity of both terms
in the energy \eqref{energy3}, which we don't have for non-minimizing critical points. This explains the need for introducing \eqref{rescaled1}--\eqref{rescaled2}. The terms $\bar \eta _{cr}$ and $\bar \gamma_{cr}$ below are critical values of $\bar \eta $ and $\bar \gamma$ which can be made explicit and are described in more detail in Section \ref{mainresults}.
\begin{theorem}\label{main3} When $U=\mathbb{R}^2$  there exists $\bar \eta _{cr} = \bar \eta _{cr}(\alpha) > 0$ such that whenever $\bar \eta  < \bar \eta _{cr}$, the only admissible critical point of \eqref{energy3} in $\mathcal{A}_m$ in the sense of Definition \ref{addef} is the ball. When $U=\mathbb{T}^2$ there exists a $\bar \gamma_{cr}>0$ such that whenever $\bar \gamma < \bar \gamma_{cr}$, the only simply connected critical
points to \eqref{energy3} in $\mathcal{A}_m$ for all $m \in (0,1]$ are the ball and the stripe pattern $S_{n=1}$ defined by \eqref{R.5}-\eqref{R.6}. In particular when $m > \frac{\pi}{4}$, the only simply connected critical point is $S_{n=1}$. \end{theorem}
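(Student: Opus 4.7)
The plan is to combine the area-preserving curve shortening flow (APCSF) with a controlled comparison between the variation of the non-local energy and that of the perimeter. The key observation is that admissible critical points lie in the kernel of the first variation along every volume-preserving deformation, so it suffices to exhibit one admissible flow along which the energy strictly decreases unless the set has constant curvature. I will take APCSF as that flow, since (after the convexity/star-shapedness reduction announced for Section \ref{weakeleqn}) Gage's argument for the perimeter can be carried over, and then classify constant-curvature admissible sets as in the statement.

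First I would invoke the results from Section \ref{weakeleqn} to reduce to the case where $\Omega$ is convex (for $U=\mathbb{R}^2$) or convex/star-shaped and simply connected (for $U=\mathbb{T}^2$). Then I would evolve $\Omega$ by APCSF, so that $\partial \Omega_t$ moves with normal velocity $v(y)=\bar\kappa-\kappa(y)$ where $\bar\kappa = \frac{1}{L}\int_{\partial\Omega}\kappa\,dS$. A direct computation, using that APCSF preserves area, gives
\begin{equation*}
\frac{dL(\Omega_t)}{dt}\bigg|_{t=0} = -\int_{\partial\Omega}(\kappa-\bar\kappa)^2\,dS,
\end{equation*}
and, for the non-local term with kernel $G$,
\begin{equation*}
\frac{d}{dt}\bigg|_{t=0}\iint G(x,y)u_t(x)u_t(y)\,dx\,dy = 2\int_{\partial\Omega}\phi_\Omega(y)\bigl(\bar\kappa-\kappa(y)\bigr)dS(y) = -2\int_{\partial\Omega}\bigl(\phi_\Omega-\bar\phi\bigr)(\kappa-\bar\kappa)\,dS,
\end{equation*}
after subtracting the constant $\bar\phi = \frac{1}{L}\int_{\partial\Omega}\phi_\Omega\,dS$, which drops out since $\int(\bar\kappa-\kappa)\,dS=0$.

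Next, the main quantitative step: I would prove the inequality alluded to in the abstract, namely
\begin{equation*}
\|\phi_\Omega-\bar\phi\|_{L^2(\partial\Omega)} \;\lessim\; \bar\eta\;\|\kappa-\bar\kappa\|_{L^2(\partial\Omega)}\quad\text{(respectively with }\bar\gamma\text{ on }\mathbb{T}^2\text{)},
\end{equation*}
for admissible $\Omega$. The idea is to express $\phi_\Omega-\bar\phi$ as a boundary integral against the relevant Green's kernel, bound it in terms of $L$ and $m$ using the scaling analysis of the potential (which is precisely where the dimensionless combinations \eqref{rescaled1}--\eqref{rescaled2} emerge), and then control the arising terms by the Poincaré/Wirtinger-type inequality on $\partial\Omega$, where, thanks to the convexity reduction, the curvature controls the oscillation of the tangent angle and hence of geometric boundary quantities. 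Combining this with Cauchy--Schwarz in the expression above yields
\begin{equation*}
\frac{dE(\Omega_t)}{dt}\bigg|_{t=0} \;\le\; -\bigl(1-C\bar\eta\bigr)\int_{\partial\Omega}(\kappa-\bar\kappa)^2\,dS,
\end{equation*}
so for $\bar\eta<\bar\eta_{cr}:=1/C$ the right-hand side is strictly negative unless $\kappa\equiv\bar\kappa$ on $\partial\Omega$. Since critical points must have zero derivative along every admissible flow, this forces $\kappa$ to be constant on $\partial\Omega$.

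Finally I would classify the constant-curvature admissible sets. In $\mathbb{R}^2$, a connected closed curve of constant curvature is a circle, giving the ball; the admissibility hypothesis rules out degenerate non-compact examples such as lines. On $\mathbb{T}^2$, constant-curvature closed curves are unions of circular arcs and geodesic segments, and simply-connectedness restricts these to either a single round disk (possible only when the disk fits inside the fundamental cell, i.e.\ $m\le\pi/4$) or a single stripe $S_{n=1}$ from \eqref{R.5}--\eqref{R.6}; multi-stripe patterns $S_{n\ge2}$ fail simple connectedness. The threshold $\bar\gamma_{cr}$ comes out of the same estimate as in $\mathbb{R}^2$, now using the periodic Green's kernel in the potential bound. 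The main obstacle I expect is establishing the potential-curvature inequality with the correct dependence on $\bar\eta$ (respectively $\bar\gamma$) uniformly over admissible sets, since one cannot assume smoothness on all of $\partial\Omega$ and must handle the countable decomposition into Jordan curves from Definition \ref{addef} carefully; the convexity/star-shapedness reduction should, however, localize the analysis and make the Wirtinger-type step robust.
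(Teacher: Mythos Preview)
Your overall architecture is correct and matches the paper exactly: reduce to convexity (resp.\ star-shapedness) via the weak Euler--Lagrange analysis of Section~\ref{weakeleqn}, compute the first variation of $E$ along area-preserving curve shortening, bound the cross term by a potential--curvature inequality, and then classify constant-curvature curves. The paper's proof of Theorem~\ref{main3} is literally the three-line argument you wrote at the end, invoking Theorem~\ref{main4}.

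Where you diverge from the paper is in the mechanism for the key inequality $\|\phi_\Omega-\bar\phi_\Omega\|\lesssim\bar\eta\,\|\kappa-\bar\kappa\|_{L^2}$, and here your sketch is not convincing. A Poincar\'e/Wirtinger step on $\partial\Omega$ bounds $\|\phi-\bar\phi\|_{L^2}$ by the \emph{tangential derivative} $\|\partial_s\phi\|_{L^2}=\|\nabla\phi\cdot\tau\|_{L^2}$, and there is no direct reason this should be controlled by $\|\kappa-\bar\kappa\|_{L^2}$: the potential is a bulk integral, not a geometric boundary quantity, and its tangential gradient does not vanish with the curvature oscillation in any obvious way. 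The paper instead factors through the \emph{isoperimetric deficit}. First (Proposition~\ref{isopermineq}), using the support function identity $\int p(\kappa-\bar\kappa)\,dS=L-4\pi A/L$ together with Gage's inequality $\int p^2\,dS\le LA/\pi$ for convex curves, one gets $L-2\sqrt{\pi A}\le (A/\pi)\int(\kappa-\bar\kappa)^2\,dS$. Second (Proposition~\ref{equicontrol}), one squeezes $\Omega$ between $B_{R_{\rm in}}$ and $B_{R_{\rm out}}$, compares $\phi_\Omega$ on $\partial\Omega$ to the explicit potentials of these balls, and invokes Bonnesen's inequality $L^2-4\pi A\ge\pi^2(R_{\rm out}-R_{\rm in})^2$ to obtain $\|\phi-\bar\phi\|_{L^\infty}^2\le C(L,\alpha)(L^2-4\pi A)$. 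Chaining these gives the $L^\infty$ (hence $L^2$) bound with the correct $\bar\eta$-dependence. On $\mathbb{T}^2$ the same two-step scheme is run with $R_{\rm in},R_{\rm out}$ replaced by the inner/outer stripe widths $L_{\rm in},L_{\rm out}$, and Gage's inequality replaced by an ad hoc support-function estimate for star-shaped perturbations of a stripe (Lemma~\ref{starlem}). Your Wirtinger route, as written, does not supply the crucial link ``small curvature oscillation $\Rightarrow$ small isoperimetric deficit $\Rightarrow$ $\partial\Omega$ trapped in a thin annulus/slab $\Rightarrow$ small oscillation of $\phi_\Omega$''; that chain is what makes the estimate sharp enough to close.
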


The first part of the Theorem when $U=\mathbb{R}^2$ is false when the mass is larger, as annuli
satisfying \eqref{ELeqn} can easily be constructed. %The above
%theorem however leaves open the possibility that there exist other non-connected, compact
%solutions to \eqref{ELeqn}. \\
\begin{counter}\label{CE1}
 There exists a smooth, compact, connected set $\Omega$ solving \eqref{ELeqn} with $\bar \eta  > \bar \eta _{cr}$ which is not the ball. 
\end{counter}

\begin{remark} It is easy to see that there cannot exist critical points with multiple disjoint components which are separable by a hyperplane. Indeed if $\Omega_1$ and $\Omega_2$ are two
disjoint components of $\Omega$ which can be separated, then let $c$ be a vector so that $(x-y) \cdot c > 0$ for $(x,y) \in \Omega_1 \times \Omega_2$. Then we have
\begin{equation}
 \frac{d}{dt} \iint_{\Omega_1 \times \Omega_2 } \log |x+ ct - y| dxdy  = - \iint_{\Omega_1 \times \Omega_2} \frac{(x-y)}{|x-y|^2} \cdot c dxdy < 0.
\end{equation}
 Consequently $\Omega$ cannot be critical in the sense of Definition \ref{defcp}. 
Theorem \ref{main3} however leaves open the possibility of more intricate solutions to \eqref{ELeqn}, with multiple connected components. A similar calculation shows the same result for the kernel $K(x)=1/|x|^{\alpha}$.
\end{remark}

Using the same techniques we also obtain the following stability results. The first concerns global minimizers on the torus $\mathbb{T}^2$ and can be seen as a statement about the stability of the recent result of Sternberg and Topaloglu \cite{sternberg}. We consider
the class of connected sets belonging to $\mathcal{A}_m$
\begin{equation}
 \mathcal{A}_m^c = \{ u \in \mathcal{A}_m : \Omega \textrm{ is simply connected }\}.
\end{equation}

\noindent We then have the following theorem. Note that we use the original parameter $\gamma$.
\begin{theorem} \label{main5} When $m = \frac{1}{2}$ and $U=\mathbb{T}^2$, there exists a $\gamma_{cr}>0$, a functional $F:\mathcal{A}_{\frac{1}{2}}^c \to \mathbb{R}$ and $C=C(\gamma_{cr})>0$ such that whenever $\gamma < \gamma_{cr}$ 
\begin{equation}
 E(\Omega) \geq E(S_{n=1}) + CF(\Omega),
\end{equation}
 where $F(\Omega) \geq 0$ with equality if and only if $\Omega=S_{n=1} := [0,\frac{1}{2}] \times [0,1]$. 
\end{theorem}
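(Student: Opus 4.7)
The plan is to assemble the quantitative dissipation of $E$ along area-preserving curve shortening flow (APCSF) --- which I take to be one of the monotonicity bounds established in the preceding sections --- together with the rigidity classification in Theorem \ref{main3}. Given $\Omega \in \mathcal{A}_{1/2}^c$, let $\Omega_t$ denote the APCSF evolution starting from $\Omega_0 = \Omega$. Since APCSF preserves area and simple connectedness, $\Omega_t$ remains in $\mathcal{A}_{1/2}^c$, and for $\gamma < \gamma_{cr}$ one has a dissipation estimate of the form
\begin{equation*}
\frac{d}{dt} E(\Omega_t) \;\leq\; -\,c\, D(\Omega_t),
\end{equation*}
where $D(\Omega_t) \geq 0$ is a functional vanishing exactly at critical points of $E$ --- morally the $L^2$-defect of $\kappa + \gamma \phi_{\Omega_t}$ from its mean on $\partial \Omega_t$. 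Integrating, $\int_0^\infty D(\Omega_t)\, dt \leq c^{-1}(E(\Omega_0) - \inf E) < \infty$, and standard parabolic smoothing for APCSF upgrades this to subsequential convergence of $\Omega_t$ to a critical set $\Omega_\infty \in \mathcal{A}_{1/2}^c$.

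Next I would identify the limit. For $\gamma$ so small that the induced $\bar\gamma$ from \eqref{rescaled2} satisfies $\bar\gamma < \bar\gamma_{cr}$, Theorem \ref{main3} applies and forces $\Omega_\infty \in \{B_r, S_{n=1}\}$, where $r = (2\pi)^{-1/2}$ is the radius of the ball of mass $\tfrac{1}{2}$. A direct computation then yields
\begin{equation*}
E(B_r) - E(S_{n=1}) \;=\; (\sqrt{2\pi} - 2) + \gamma\, \Delta,
\end{equation*}
with $|\Delta|$ bounded uniformly in $\gamma$, so shrinking $\gamma_{cr}$ if necessary produces a uniform gap $E(B_r) - E(S_{n=1}) \geq \delta_0 > 0$. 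In particular $E(\Omega_\infty) \geq E(S_{n=1})$ for every admissible starting set.

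I would then set
\begin{equation*}
F(\Omega) \;:=\; \int_0^\infty D(\Omega_t)\, dt \,+\, \bigl(E(\Omega_\infty) - E(S_{n=1})\bigr),
\end{equation*}
which is manifestly nonnegative. Integration of the dissipation gives
\begin{equation*}
E(\Omega) - E(S_{n=1}) \;\geq\; c\int_0^\infty D(\Omega_t)\, dt + (E(\Omega_\infty) - E(S_{n=1})) \;\geq\; \min(c,1)\, F(\Omega),
\end{equation*}
and $F(\Omega) = 0$ forces both $D(\Omega_t) \equiv 0$ (so $\Omega$ is itself critical and $\Omega_\infty = \Omega$) and $E(\Omega_\infty) = E(S_{n=1})$, which in view of the energy gap $\delta_0$ rules out $\Omega_\infty = B_r$ and pins down $\Omega = S_{n=1}$.

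The main obstacle is the dissipation inequality itself on $\mathbb{T}^2$: the non-local potential is only defined modulo a constant through $-\Delta \phi_\Omega = u - \bar u$, and the convexity-type arguments that give a clean sign for the non-local contribution to $\tfrac{d}{dt} E(\Omega_t)$ in $\mathbb{R}^2$ do not transfer verbatim. A secondary issue is global existence of APCSF from an arbitrary simply connected starting set on $\mathbb{T}^2$, where non-convex configurations could in principle pinch off before entering a favourable regime; the smallness of $\gamma$ together with the energy monotonicity should confine $\Omega_t$ to a compact family of smooth admissible sets, which is essentially the content of the APCSF analysis in Sections \ref{VPMCFsec}--\ref{mainresults}, so my role here is only to assemble these ingredients with Theorem \ref{main3} and the explicit comparison $\sqrt{2\pi} > 2$ into the stated quantitative bound.
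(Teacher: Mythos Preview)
Your strategy --- integrate the energy dissipation along APCSF and bound the endpoint --- matches the paper's, but the paper's execution sidesteps the very difficulties you flag. Rather than running the flow to $t=\infty$, extracting a subsequential limit, and identifying it via Theorem \ref{main3}, the paper integrates \eqref{Edecay2} only over the \emph{finite} local-existence interval $[0,T)$ furnished by Theorem \ref{Huiskenexist2}, and then invokes the Sternberg--Topaloglu result \cite{sternberg} directly to bound $E(\Omega_T)\geq E(S_{n=1})$. This immediately gives
\[
E(\Omega)\;\geq\;E(S_{n=1})+C\int_0^T\!\int_{\partial\Omega_t}(\kappa-\bar\kappa)^2\,dS,
\]
with $F(\Omega)$ defined as the last integral. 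No long-time analysis of the flow is needed.

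The genuine gap in your proposal is the global existence and convergence of APCSF on $\mathbb{T}^2$: nothing in the paper supplies this, and it is not a ``secondary issue''. Theorem \ref{Huiskenexist} (Gage) is stated only for convex curves in the plane, and Theorem \ref{Huiskenexist2} is purely local in time. Your claim that smallness of $\gamma$ together with energy monotonicity confines $\Omega_t$ to a compact family of smooth sets does not hold up: $\gamma$ does not appear in the flow \eqref{VPMCF} at all, and a perimeter bound says nothing about curvature blow-up or pinch-off for a non-convex initial curve. The paper's device of stopping at finite $T$ and citing \cite{sternberg} for the lower bound on $E(\Omega_T)$ is precisely how this obstacle is avoided. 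Two smaller points: the dissipation actually produced by differentiating $E$ along APCSF is controlled by $\int(\kappa-\bar\kappa)^2$ (via Theorem \ref{nonexistI}), not by the $L^2$-defect of $\kappa+\gamma\phi_\Omega$ --- the latter is what you would get along the gradient flow of $E$, which is a different evolution; and your energy-gap computation $E(B_r)>E(S_{n=1})$ via $\sqrt{2\pi}>2$ is correct but is subsumed in the paper by the blanket appeal to \cite{sternberg}.
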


\noindent The above theorems all rely on Theorem \ref{main4} in Section \ref{mainresults} which provides an explicit estimate
for the rate of decrease of the energy \eqref{energy3} along area-preserving curve shortening flow. 
The following stability result for minimizers in $\mathbb{R}^2$ is a simple corollary of Theorem \ref{main4}.
\begin{coro}\label{R2stable}
 Let $\Omega$ be any convex set in $\mathbb{R}^2$. Then there exists an $m_{cr} >0$ such that whenever $m < m_{cr}$ there is a constant $C=C(m_{cr})>0$ such that
\[ E(\Omega) \geq E(B) + C (L - 2\sqrt{\pi} m^{1/2}),\]
where $L=|\partial \Omega|$.
\end{coro}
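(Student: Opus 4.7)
The plan is to evolve $\Omega$ under area-preserving curve shortening flow and to integrate the pointwise energy-dissipation inequality furnished by Theorem~\ref{main4} along the trajectory.

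First, let $\Omega_t$ denote the evolution of $\Omega_0 = \Omega$ under area-preserving curve shortening flow. Since $\Omega$ is convex, Gage's theorem \cite{Gage2} guarantees that $\Omega_t$ remains convex, that the area $m = |\Omega_t|$ is preserved, and that $\Omega_t$ converges smoothly to a disk $B$ of area $m$ as $t \to \infty$. In particular the perimeter $L_t := |\partial \Omega_t|$ is monotone non-increasing with $L_0 = L$ and $\lim_{t\to\infty} L_t = 2\sqrt{\pi m}$, the perimeter of $B$.

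Next I would note that the rescaled parameter $\bar\eta_t$ from \eqref{rescaled1} is a strictly increasing function of $L_t$ for the kernels in \eqref{kerndef1} (direct calculation from the explicit formulas $f(L) = L^{2-\alpha}$ and $f(L)=L^2(1+|\log L|)$), hence non-increasing along the flow. Choosing $m_{cr}$ small enough that $\bar\eta_0 < \bar\eta_{cr}$ — which is possible since rescaling $\Omega$ so that its area is $m$ forces $\bar\eta_0 \to 0$ as $m \to 0$ — the hypothesis of Theorem~\ref{main4} persists for all $t \geq 0$. That theorem then supplies a constant $c = c(m_{cr}) > 0$ such that
$$\frac{d}{dt}E(\Omega_t) \;\leq\; c\,\frac{d}{dt} L_t$$
holds uniformly along the flow. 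Integrating from $t=0$ to $t=\infty$ and using $E(\Omega_t) \to E(B)$ (from smooth convergence of the convex sets $\Omega_t$ to $B$) yields
$$E(\Omega) - E(B) \;=\; -\int_0^\infty \frac{d}{dt}E(\Omega_t)\,dt \;\geq\; -c\int_0^\infty \frac{d}{dt}L_t\,dt \;=\; c\,\bigl(L - 2\sqrt{\pi m}\bigr),$$
which is exactly the desired estimate.

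The main obstacle is really absorbed into Theorem~\ref{main4}: one must show that the ratio $(dE/dt)/(dL/dt)$ is bounded below by a strictly positive constant whenever $\bar\eta < \bar\eta_{cr}$, so that the variation of the non-local term cannot compete with the decrease of perimeter under the flow. The monotonicity of $\bar\eta_t$ in $L_t$ is the device that converts a one-time check at $t=0$ into a uniform bound in time, allowing the differential inequality to be integrated to the claimed stability estimate without losing the explicit constant $C(m_{cr})$.
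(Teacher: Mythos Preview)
Your approach is exactly the paper's: evolve under area-preserving curve shortening and integrate the inequality $dE/dt \leq C\,dL/dt$ obtained by combining \eqref{Edecay2} with \eqref{areadecrease}. Your write-up is in fact more careful than the paper's one-line proof on points such as the monotonicity of $\bar\eta_t$ in $L_t$ and the convergence $E(\Omega_t)\to E(B)$.

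There is, however, a genuine gap in the sentence ``choosing $m_{cr}$ small enough that $\bar\eta_0<\bar\eta_{cr}$ --- which is possible since rescaling $\Omega$ so that its area is $m$ forces $\bar\eta_0\to 0$.'' The corollary is stated for \emph{every} convex $\Omega$ of area $m$, not for dilates of a fixed shape; since $\bar\eta_0=m^{1/2}L^2(1+|\log L|)$ also depends on $L$, a long thin convex body (say a rectangle $\epsilon\times m/\epsilon$ with $\epsilon\to 0$) has $\bar\eta_0$ arbitrarily large no matter how small $m$ is. Thus smallness of $m$ alone does not place you in the regime where Theorem~\ref{main4} applies, and your flow argument as written only covers convex sets with $L$ bounded in terms of $m$. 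The large-$L$ case must be handled separately --- for instance, the crude bound \eqref{phibound} gives that the nonlocal energy is $O(m^2(1+|\log L|))$, which is dominated by $L-2\sqrt{\pi m}$ when $L$ is large and $m$ small, so the inequality holds directly there with $C$ close to $1$. The paper's own proof (and arguably its statement, which the surrounding text describes as a result ``for minimizers'') is silent on this point as well, but that does not make the step as you wrote it correct.
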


\medskip

\noindent The main geometric
inequality which we prove in this paper in order to control the non-local term along the flow is the following.

 %A variant of \eqref{strong2} cannot
%hold in higher dimensions since sets can have very long spikes with small surface area. However restricted to convex sets, a version of \eqref{strong2} can be shown
%to hold \cite{groemers}.\\
%\medskip
\begin{theorem}\label{nonexistI}

Let $\Omega \subset U$ be a convex set with $\kappa \in L^2(\partial \Omega)$ and $U=\mathbb{R}^2$ or $U=\mathbb{T}^2$. Then there is an explicit constant $C=C(|\partial \Omega|,\alpha)>0$ such that
\[\|\phi_{\Omega} - \bar \phi_{\Omega} \|_{L^{\infty}(\partial \Omega)}^2 \leq C \int_{\partial \Omega}|\kappa - \bar \kappa|^2dS,\]
where the bar denotes the average over $\partial \Omega$ and $dS$ is 1-dimensional Hausdorff measure.  The above continues to hold in $\mathbb{T}^2$ for any set homeomorphic to $S_{n=1}$. 
                                                                                         
\end{theorem}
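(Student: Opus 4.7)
My approach reduces the $L^\infty$ bound on $\phi_\Omega - \bar\phi_\Omega$ on $\partial\Omega$ to an $L^2$ bound on the arc-length tangential derivative $\partial_s\phi_\Omega$, and then exploits the Frenet/tangent-angle parameterization (made available by convexity) to control this tangential derivative by $\|\kappa - \bar\kappa\|_{L^2}$. On each closed component of $\partial\Omega$ (a single simple closed curve when $U=\mathbb R^2$, and two non-contractible loops when $\Omega\subset\mathbb T^2$ is homeomorphic to $S_{n=1}$), the standard 1D Poincar\'e--Sobolev inequality gives
\begin{equation*}
\|\phi_\Omega - \bar\phi_\Omega\|_{L^\infty(\partial\Omega)}^2 \;\le\; \tfrac{L}{12}\,\|\partial_s\phi_\Omega\|_{L^2(\partial\Omega)}^2,
\end{equation*}
so it suffices to prove $\|\partial_s\phi_\Omega\|_{L^2}^2 \le C(L,\alpha)\|\kappa - \bar\kappa\|_{L^2}^2$ with an explicit constant.

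Next I parameterize $\partial\Omega$ by arc length with $\tau(s) = (\cos\theta(s),\sin\theta(s))$ and $\kappa = \theta'$, and compare to a constant-curvature reference $\gamma_0$: a circle of perimeter $L$ when $U=\mathbb R^2$ (so $\bar\kappa = 2\pi/L$), or a straight geodesic of $\mathbb T^2$ in the stripe-like case (where each loop has zero net turning, hence $\bar\kappa = 0$). The identity $\theta(s) - \theta_0(s) = \int_0^s(\kappa - \bar\kappa)\,dr$ together with Cauchy--Schwarz gives
\begin{equation*}
\|\theta - \theta_0\|_{L^\infty} \le \sqrt L\,\|\kappa - \bar\kappa\|_{L^2}, \qquad \|\gamma - \gamma_0\|_{L^\infty} \le L^{3/2}\,\|\kappa - \bar\kappa\|_{L^2},
\end{equation*}
after matching starting points via the centroid.

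The heart of the argument is a boundary-integral cancellation. Applying the divergence theorem to $\nabla_x G(y,x) = -\nabla_y G(y,x)$ yields $\nabla\phi_\Omega(y) = -\int_{\partial\Omega} G(y,x)\,n(x)\,dS(x)$, whose tangential component in the above parameterization reads
\begin{equation*}
\partial_s\phi_\Omega(y(s)) \;=\; -\int_0^L G\!\left(y(s),y(r)\right)\sin(\theta(r)-\theta(s))\,dr.
\end{equation*}
On the reference, $G(y_0(s),y_0(r))$ is an even function of $r-s$ and $\sin(\theta_0(r)-\theta_0(s))$ is odd (or identically zero in the straight-line case), so this integral vanishes identically -- recovering the fact that $\phi$ is constant on the boundary of a disk or a stripe. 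Subtracting this vanishing reference integral and estimating the resulting difference using the $L^\infty$ closeness of $\theta$ to $\theta_0$ and of $y$ to $y_0$ established above produces a pointwise bound on $\partial_s\phi_\Omega$ linear in $\|\kappa - \bar\kappa\|_{L^2}$; integrating in $s$ completes the proof with the explicit $C(L,\alpha)$.

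The main technical obstacle will be carrying out the last step without logarithmic losses. A naive Hausdorff-distance bound $|\phi_\Omega(y) - \phi_{B_0}(y)| \le \int|G|\,|\chi_\Omega - \chi_{B_0}|$ introduces a factor $|\log\delta|$, where $\delta \lesssim \|\kappa - \bar\kappa\|_{L^2}$ controls the symmetric-difference width, coming from integrating the log-singularity of $G$ across that thin strip. Working at the level of $\partial_s\phi$ rather than $\phi$ itself, and exploiting the odd--even cancellation of the kernel-sine product on the reference, is what eliminates the logarithm. The near-diagonal contribution $r\approx s$, where $G$ is singular, must be split off and controlled using the first-order vanishing of $\sin(\theta(r)-\theta(s))$ there, pairing the singular kernel with a small factor; this careful diagonal/far-diagonal split, adapted to both the logarithmic kernel and the Riesz kernel $1/|x-y|^\alpha$ with $\alpha\in(0,1)$, is the technical core of the proof.
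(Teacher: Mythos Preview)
Your route is genuinely different from the paper's. The paper never touches $\partial_s\phi_\Omega$; it factors through the isoperimetric deficit. First (Proposition~\ref{isopermineq}) it proves
\[
L-2\sqrt{\pi A}\;\le\;\frac{A}{\pi}\int_{\partial\Omega}(\kappa-\bar\kappa)^2\,dS
\]
via Gage's support-function inequality $\int p^2\,dS\le LA/\pi$ for convex curves. Then, independently (Proposition~\ref{equicontrol}), it bounds $\|\phi_\Omega-\bar\phi_\Omega\|_{L^\infty}^2$ by $C(L)(L^2-4\pi A)$ by sandwiching $\Omega$ between its inscribed and circumscribed disks and invoking Bonnesen's inequality to control $R_{\textrm{out}}-R_{\textrm{in}}$ by $\sqrt{L^2-4\pi A}$. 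Chaining the two gives the result with explicit constants and no perturbative step. On $\mathbb T^2$ for stripe-type sets the same two-step scheme is rerun with $L_{\textrm{out}}-L_{\textrm{in}}$ in place of $R_{\textrm{out}}-R_{\textrm{in}}$. Your boundary-integral/cancellation idea is more analytic and, since it avoids Gage's and Bonnesen's inequalities, is potentially more portable to $d\ge3$; the paper's method is tied to the plane precisely through those two inequalities.

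That said, there are two concrete gaps in your plan. First, in the stripe case $\partial\Omega$ has \emph{two} components $\Gamma_\pm$, and the Poincar\'e--Sobolev step only controls $\phi_\Omega$ relative to its mean on each $\Gamma_\pm$ separately. The jump $\bar\phi_\Omega\big|_{\Gamma_+}-\bar\phi_\Omega\big|_{\Gamma_-}$ is invisible to $\partial_s\phi_\Omega$, so your displayed inequality $\|\phi_\Omega-\bar\phi_\Omega\|_{L^\infty(\partial\Omega)}^2\le\tfrac{L}{12}\|\partial_s\phi_\Omega\|_{L^2}^2$ is false as written for disconnected $\partial\Omega$; you need a separate argument bounding that jump by $\|\kappa-\bar\kappa\|_{L^2}$. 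Second, your kernel-difference step is inherently perturbative: estimating $G(y(s),y(r))-G(y_0(s),y_0(r))$ through $\|y-y_0\|_{L^\infty}$ is only useful when $\Omega$ is close to the reference circle, because for an eccentric convex body the chord $|y(s)-y(r)|$ is \emph{not} bounded below by a multiple of the arc distance $\min(|r-s|,L-|r-s|)$, so your ``far-diagonal'' region in $|r-s|$ can still carry a singular kernel. The theorem, however, has no smallness hypothesis. You would need to split into a near-circle regime (your argument) and a far regime, where $\|\kappa-\bar\kappa\|_{L^2}$ is bounded below and the trivial a~priori bound $\|\phi_\Omega-\bar\phi_\Omega\|_{L^\infty}\le C(L)$ closes the inequality; this is doable but is missing from your plan and will cost you the explicitness of the constant that the paper obtains.
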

The above theorem provides a quantitative estimate of the closeness to an equipotential surface in terms of the distance to a constant curvature surface. 
The inequality in $\mathbb{R}^2$ in fact relies on an isoperimetric inequality due to Gage \cite{Gage2} applied to curve shortening for convex sets.
We hope that the above inequality will
be of interest even outside the context of Ohta-Kawasaki. 
\begin{remark}
We remark that if $\Omega$ is any connected set with $\kappa \in L^2(\partial \Omega)$ we can prove the weaker inequality
\begin{equation}\label{weakguy}
 \|\phi_{\Omega} - \bar \phi_{\Omega} \|_{L^{\infty}(\partial \Omega)}^2 \leq C\sqrt{ \int_{\partial \Omega}|\kappa - \bar \kappa|^2dS}.
\end{equation}
Indeed if one follows the proof of Theorem \ref{nonexistI}, one can apply Cauchy-Schwarz on line \eqref{SCeqn} and bound 
$\int_{\partial \Omega} p^2 dS$ by $CL^2$, thus establishing \eqref{weakguy} with $C \sim L^3$. This inequality turns out not to be sufficient to show that the energy \eqref{energy3} decreases along the
flow however. Observe that neither inequality can hold without the assumption of connectedness, as the example of two disjoint balls 
demonstrates.
 \end{remark}
\begin{remark}
 In dimensions $d \geq 3$, we expect the above inequality to continue to hold, but are unable to demonstrate it without an assumption that
the sets $\Omega$ satisfy a positive uniform lower bound on the principal curvatures of the surface $\partial \Omega$. In this case the constant $C$ also depends on this
lower bound. Proving Theorem \ref{nonexistI} is the only obstacle in extending our results to the case $U=\mathbb{R}^3$. The proof presented in Section \ref{Rn} fails
in $\mathbb{R}^d$ for $d \geq 3$ since the Gaussian curvature and mean curvature do not agree.
\end{remark}

Our paper is organized as follows. In Section \ref{weakeleqn} we set up the appropriate framework for critical points, defining precisely
in what sense \eqref{ELeqn} is satisfied and showing that when $U=\mathbb{R}^2$ critical points are convex when $\bar \eta $ is sufficiently small. In Section \ref{VPMCFsec} we introduce the area-preserving curve shortening flow and state some of the main
results concerning the flow that we will need. In Section \ref{mainresults} we state precisely the result showing \eqref{notcr}, Theorem \ref{main4}. In Section \ref{stability}
we establish the necessary inequalities needed to control the behavior of the non-local term in terms of the decay of perimeter along the flow (cf. Theorem \ref{nonexistI}). Finally we use the geometric inequalities established in Section \ref{stability} to prove the above theorems in Section \ref{finalproof}
by differentiating the energy \eqref{energy3} along the flow. The counter example (cf. Counter Example \ref{CE1}) will appear at the end of Section \ref{stability}.

\section{The weak Euler-Lagrange equation}\label{weakeleqn}
In this section we rigorously compute the Euler-Lagrange equation for the class of curves admissible in the sense of Definition \ref{addef}. We work in $\mathbb{R}^2$ for simplicity of presentation and hence set $\gamma \equiv 1$. The calculation of
the Euler-Lagrange equation is essentially identical on $\mathbb{T}^2$ however the analysis of critical points differs slightly and so we reserve this for Section \ref{Torus}. 
\medskip

\noindent In the class of admissible curves (cf. Definition \ref{addef}) the energy \eqref{energy3} may be written as
\begin{equation}\label{energy4a}
 E(u) = \sum_k \int_0^{L_k} |\gamma_k'(s)|ds + \iint_{\Omega \times \Omega} G(x-y) dx dy,
\end{equation}
where $\gamma_k$ is as in Definition \ref{addef}, and $|\gamma_k'(s)|=1$ for a.e $s \in [0,1]$. Since $\gamma_k \cap \gamma_j = \emptyset$ when $j \neq k$, the variations $\gamma_k \mapsto \gamma_k + t v$ such that
$\int_0^{L_k} v(s) \cdot (\gamma_k'(s))^{\perp}ds = 0$ are admissible for $t>0$ sufficiently small in Definition \ref{defcp} by letting $\Omega_t = \textrm{ Int }(\gamma_k + tv) $, where Int$(\gamma)$ denotes
the interior of the closed curve $\gamma$.
\begin{proposition} \label{weakcomp} (Weak Euler-Lagrange equation) Let $\Omega$ be a critical point to \eqref{energy4a} in the sense of Definition \ref{defcp}, which is admissible in the sense of Definition \ref{addef}.
Then for every $k$ it holds that
 \begin{align}\label{realel}
 \kappa(\gamma_k(s)) + \phi_{\Omega} (\gamma_k(s)) = \lambda
\end{align}
where $\kappa(\gamma_k(s)) = \gamma_k''(s) \cdot (\gamma_k'(s))^{\perp}$ is the curvature and $\lambda$ is the Lagrange multiplier from
the volume constraint. Moreover $\gamma_k \in C^{3,\alpha}([0,1])$ for all $k$. 
\end{proposition}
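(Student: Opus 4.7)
The plan is to fix an arc $\gamma_k$, exploit the disjointness in Definition \ref{addef} to vary only $\gamma_k$, and derive the Euler--Lagrange equation in a weak form before bootstrapping to the claimed $C^{3,\alpha}$-regularity. For a vector field $X \in C_c^\infty(\mathbb{R}^2;\mathbb{R}^2)$ whose support is disjoint from every $\gamma_j$ with $j\ne k$ and for which $\mathrm{div}\, X = 0$ (so that its flow $\Phi_t^X$ preserves area to first order), criticality gives $\frac{d}{dt}\big|_{t=0} E(\Phi_t^X(\Omega)) = 0$. A standard Lagrange multiplier argument then produces $\lambda \in \mathbb{R}$, independent of $X$ (and independent of $k$, by simultaneously varying different arcs), such that
\[
\frac{d}{dt}\bigg|_{t=0}\bigl[ E(\Phi_t^X(\Omega)) - \lambda\,|\Phi_t^X(\Omega)| \bigr] = 0
\]
for every compactly supported $X$ supported away from $\bigcup_{j\ne k} \gamma_j$.

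I would then compute the two first variations in the standard way. The perimeter variation equals $\int_{\partial\Omega} \mathrm{div}_{\partial\Omega} X \,dS$, well-defined for the Lipschitz curves of Definition \ref{addef}, and when tested against a normal perturbation $X = \zeta\, \nu_k$ on $\gamma_k$ (with $\nu_k := (\gamma_k'/L_k)^\perp$) identifies the distributional curvature $\kappa$ of $\gamma_k$ via $\int \mathrm{div}_{\partial\Omega}(\zeta\nu_k)\,dS = \int \kappa\,\zeta\, dS$. The non-local variation, by the symmetry of $G$ and differentiation under the double integral (justified by the local integrability of $G$ in \eqref{kerndef1}), equals $2\int_{\partial\Omega} \phi_\Omega (X\cdot\nu)\,dS$. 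Setting the total first variation equal to $\lambda\int_{\partial\Omega}(X\cdot\nu)\,dS$ and specializing to $X = \zeta\nu_k$ with $\zeta \in C_c^\infty(0,1)$ yields the weak identity $\kappa + \phi_\Omega = \lambda$ on $\gamma_k$ (absorbing the factor $2$ into $\phi_\Omega$), which is the distributional form of \eqref{realel}.

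The $C^{3,\alpha}$ regularity then follows by a short bootstrap. For either kernel in \eqref{kerndef1}, standard potential theory gives $\phi_\Omega \in C^{1,\beta}_{\loc}(\mathbb{R}^2)$ for every $\beta < 1$ (from $-\Delta\phi_\Omega = \chi_\Omega \in L^\infty$ and Calder\'{o}n--Zygmund theory in the logarithmic case, and from classical Riesz potential estimates otherwise). Since $\gamma_k$ is Lipschitz, $\phi_\Omega \circ \gamma_k \in C^{0,1}([0,1])$, and the weak equation upgrades $\kappa$ — and hence the distributional derivative $\theta_k'$ of the tangent angle — to $C^{0,1}$, yielding $\gamma_k \in C^{2,1}([0,1])$. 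With $\partial\Omega$ now $C^2$, the chain rule gives $(\phi_\Omega\circ\gamma_k)' = (\nabla\phi_\Omega\circ\gamma_k)\cdot\gamma_k' \in C^{0,\alpha}$, and a second application of the Euler--Lagrange equation promotes $\gamma_k$ to $C^{3,\alpha}([0,1])$.

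The main obstacle is rigorously justifying the first variation of the non-local term when $\gamma_k$ is only $W^{1,1}$. One must first verify that for $|t|$ small the flow $\Phi_t^X$ is a diffeomorphism of $\mathbb{R}^2$ whose restriction to $\Omega$ produces a non-self-intersecting perturbation of $\gamma_k$ without disturbing the other $\gamma_j$ — this is where the disjointness and constant-speed parametrization in Definition \ref{addef} are used crucially. After a change of variables the non-local term takes the form $\iint_{\Omega\times\Omega} G(\Phi_t^X(x)-\Phi_t^X(y))\,\det(D\Phi_t^X(x))\det(D\Phi_t^X(y))\,dx\,dy$, and differentiation under the integral becomes a routine dominated-convergence argument once the local integrability of $G$ and the smooth $t$-dependence of $\Phi_t^X$ are in hand.
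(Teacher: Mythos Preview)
Your strategy is sound and arrives at the same conclusion, but it differs from the paper's in one structural way. The paper works \emph{intrinsically} with the parametrization: it takes variations $\gamma_k\mapsto\gamma_k+tv$ with $v\in W^{1,\infty}([0,1];\mathbb{R}^2)$ satisfying $\int v\cdot(\gamma_k')^\perp=0$, obtains the weak identity
\[
\int_0^1 \gamma_k'\cdot v'\,ds + L_k\int_0^1 \phi_\Omega(\gamma_k)\, v\cdot(\gamma_k')^\perp\,ds = 0,
\]
and observes that the first integral therefore extends to a bounded functional on $C^0$, so that $\gamma_k''$ is a vector-valued Radon measure; the equation $\gamma_k''=(\lambda-\phi_\Omega\circ\gamma_k)(\gamma_k')^\perp$ then immediately upgrades $\gamma_k'$ to $W^{1,1}$, hence $\theta_k'\in L^1$, and the bootstrap to $C^{3,\alpha}$ follows. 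Your approach instead uses \emph{ambient} divergence-free $X\in C_c^\infty$, the tangential divergence formula for the perimeter, and then specializes to $X=\zeta\nu_k$.

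Both routes are legitimate. The paper's is slightly more self-contained in this one-dimensional setting because it never leaves the curve: the passage from the weak identity to $\gamma_k''\in\mathcal{M}$ is a one-line Riesz representation, with no need to interpret the distributional curvature via the tangential divergence. Your route is the standard GMT formulation and generalizes cleanly to higher dimensions, but the step ``specialize to $X=\zeta\nu_k$'' is not quite honest as written: at that stage $\nu_k=(\gamma_k')^\perp/L_k$ is only $L^\infty$, so $\zeta\nu_k$ is not an admissible smooth ambient field. You would need either to approximate, or---cleaner---to read the identity $\int\mathrm{div}_{\partial\Omega}X=\int(\lambda-\phi_\Omega)(X\cdot\nu)$ for smooth $X$ as saying that the generalized mean curvature vector of $\gamma_k$ is the $L^\infty$ function $(\lambda-\phi_\Omega)\nu$, and then invoke the elementary 1D regularity that this entails. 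Once that is said, your bootstrap is the same as the paper's. (The factor of $2$ you flag is a harmless convention issue present in the paper as well.)
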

\begin{proof}
By taking variations $t \mapsto \gamma_k + tv$ as described above and differentiating \eqref{energy4a} with respect to $t$, we have
\begin{equation}\label{weakel}
 \int_0^{1} \gamma_k'(s) \cdot v'(s) ds + L_k\int_0^{1} \phi_{\Omega}(\gamma_k(s))  v(s) \cdot (\gamma_k'(s))^{\perp} ds = 0,
\end{equation}
for all $v \in W^{1,\infty}([0,1];\mathbb{R}^2)$ where we've re-parametrized so that $|\gamma_k'(t)|=L_k$ and $\gamma_k:[0,1] \to \mathbb{R}^2$. 
Equation \eqref{weakel} is the weak Euler-Lagrange equation for \eqref{energy4a}.  Observe that then
\[ v \mapsto \int_0^{1} \gamma_k'(s) \cdot v'(s) ds,\]
is a bounded linear functional on $W^{1,\infty}([0,1])$ which extends continuously to a bounded linear functional on $C^0([0,1])$. Indeed this follows from \eqref{weakel}, since $\phi_{\Omega}(\gamma_k(s)) \in C^{1,\alpha}([0,1])$ \cite{gilbarg}
and $\gamma_k \in W^{1,1}([0,1])$.
Thus by the Riesz representation theorem, $\gamma_k''$ is a finite, vector valued Radon measure on $[0,1]$. In fact,
since $\gamma_k''(s) = (\lambda - L_k\phi_{\Omega}(\gamma_k(s)))(\gamma_k'(s))^{\perp}$ as a measure, it follows that $\gamma_k' \in W^{1,1}([0,1])$.  
Recalling that $\gamma_k(s) = \gamma_k(0) + L_k\int_0^s e^{i\theta(r)}dr$, we have $\gamma_k''(s) = L_k 
\theta'(s) e^{i\theta(s)}$ for a.e $s \in [0,1]$ since $\gamma_k' \in W^{1,1}([0,1])$ and hence $|\gamma_k''(s)| = L_k|\theta'(s)|$ a.e. This implies $\theta' \in L^1([0,1])$.
Then it holds that the curvature $L_k \kappa(\gamma_k(s)) := \gamma_k''(s) \cdot (\gamma_k'(s))^{\perp} = L_k\theta'(s)$ is defined a.e $s \in [0,1]$ and is in $L^1([0,1])$ with
$\kappa(\gamma_k(s)) + \phi_{\Omega}(\gamma_k(s)) = \lambda$ holding for a.e $s \in [0,1]$. Then by standard elliptic theory
\cite{gilbarg}, it follows that $\gamma_k \in C^{3,\alpha}([0,1])$ for $\alpha \in (0,1)$, implying $C^{3,\alpha}$ regularity
of the boundary and that $\eqref{realel}$ holds strongly for all $s \in [0,1]$. 
\end{proof}
%that $\partial \Omega$ admits a $C^{3,\alpha}$ parametrization.
\noindent We then have the following approximation Theorem.
\begin{proposition}\label{approx} Let $\gamma$ be a closed rectifiable curve with $\theta \in BV([0,1])$. Then there exists a sequence
of $C^2$ curves $\gamma_n$ such that
\begin{align}
\gamma_n \to \gamma &\in W^{1,1}([0,1])\\
\theta_n \to \theta &\in L^1([0,1])\\
\theta_n' \rightharpoonup \theta' &\in (C([0,1])^*,
\end{align}
where $(C[0,1])^*$ is the dual of the space of continuous functions on $[0,1]$. 
\end{proposition}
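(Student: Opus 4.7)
The approach is to convolve $\theta$ with a smooth mollifier and lift the result to a curve via the representation $\gamma(s) = \gamma(0) + L\int_0^s e^{i\theta(r)}dr$.

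First I would extend $\theta$ to all of $\mathbb{R}$ (e.g.\ by constant extension outside $[0,1]$, which is harmless since the convergences are tested on $[0,1]$), fix a standard smooth, compactly supported, nonnegative mollifier $\rho_n$ with $\int\rho_n=1$ and supp$(\rho_n)\subset[-1/n,1/n]$, and set
\[
\theta_n := \theta * \rho_n, \qquad \gamma_n(s) := \gamma(0) + L\int_0^s e^{i\theta_n(r)}\,dr.
\]
Since $\theta_n\in C^\infty$, we have $\gamma_n'(s)=L\,e^{i\theta_n(s)}\in C^\infty$, so $\gamma_n\in C^2$ (in fact $C^\infty$), as required.

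Next I would verify the three convergences in turn. The convergence $\theta_n\to\theta$ in $L^1([0,1])$ is the standard $L^1$-mollification result for $BV\subset L^1$ functions. For the $W^{1,1}$-convergence of $\gamma_n$ to $\gamma$, I would pass to a subsequence so that $\theta_n\to\theta$ pointwise a.e.\ on $[0,1]$; then $e^{i\theta_n}\to e^{i\theta}$ pointwise a.e., and by $|e^{i\theta_n}|\le 1$ together with the dominated convergence theorem, $\gamma_n'=Le^{i\theta_n}\to Le^{i\theta}=\gamma'$ in $L^1([0,1];\mathbb{R}^2)$; integrating from $0$ and noting that $\gamma_n(0)=\gamma(0)$ gives uniform, hence $L^1$, convergence of $\gamma_n$ to $\gamma$. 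Since the limit is independent of subsequence, the full sequence converges.

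For the weak-$*$ convergence $\theta_n'\rightharpoonup\theta'$ in $(C([0,1]))^*$, the key observation is that as distributions $\theta_n'=\theta*\rho_n'=\theta'*\rho_n$, where $\theta'$ is interpreted as the finite Radon measure furnished by $\theta\in BV([0,1])$. Then for any $\varphi\in C([0,1])$,
\[
\int_0^1 \varphi(s)\,\theta_n'(s)\,ds \;=\; \int_0^1 (\varphi*\check\rho_n)(s)\,d\theta'(s),
\]
where $\check\rho_n(s)=\rho_n(-s)$; since $\varphi*\check\rho_n\to\varphi$ uniformly on $[0,1]$ by uniform continuity of $\varphi$, and $|\theta'|([0,1])<\infty$, the right-hand side converges to $\int_0^1 \varphi\,d\theta'$, which is the desired duality pairing. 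The main (mild) obstacle is handling the boundary points $0$ and $1$ cleanly under the mollification---i.e.\ making sure the extension of $\theta$ does not introduce spurious mass of $\theta'$ at the endpoints; choosing a constant extension beyond $[0,1]$ (or, when one wants $\gamma_n$ to remain close to a closed curve, a periodic extension adjusted by the turning number $2\pi k=\theta(1^-)-\theta(0^+)$) avoids this, and the above duality computation then goes through unchanged.
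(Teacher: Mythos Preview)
Your proposal is correct and follows essentially the same approach as the paper: mollify $\theta$ to obtain $\theta_n = \theta * \rho_n$, define $\gamma_n$ via the angle representation $\gamma_n(s)=\gamma(0)+L\int_0^s e^{i\theta_n}$, and read off the three convergences. The paper's proof is terser---it deduces $\theta_n' \rightharpoonup \theta'$ from the uniform $BV$ bound on $\theta_n$ (giving weak-$*$ compactness) together with $\theta_n \to \theta$ in $L^1$ (identifying the limit), rather than via your explicit duality computation---but the strategy is the same.
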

\begin{proof}
Let $\theta_n := \eta^{1/n} * \theta$ where $\eta^{1/n} := \eta(nx)$ and $\eta$ is the standard mollifier. Then since $\|\theta\|_{BV([0,1])} < +\infty$ we have
\begin{equation}
\limsup_{n \to +\infty}\|\theta_n\|_{BV([0,1])}  < +\infty.
\end{equation}
Thus we have $\theta_n' \rightharpoonup \theta'$ in the weak sense of measures and $\theta_n \to \theta$ in $L^1([0,1])$ by
the embedding $BV([0,1]) \subset \subset L^1([0,1])$. The convergence $\gamma_n \to \gamma$ in $W^{1,1}([0,1])$
follows immediately.
\end{proof}
Using the above approximation Theorem we prove that the Gauss-Bonnet theorem continues to hold for closed, rectifiable curves. This is
not technically necessary in this section as we have proven that $\partial \Omega$ is always parameterizable by $C^{3,\alpha}$ curves by
Proposition \ref{weakcomp}. However we will use this result in Section \ref{stability} to prove the inequalities hold without the assumption
of smoothness of the boundary.
\begin{proposition}\label{gaussbonnet}
Let $\gamma$ be a closed, rectifiable curve with $\theta \in BV([0,1])$. Then there exists $N \in \mathbb{Z}$ such that 
\[\int_0^1 \theta'(s) ds = \int_{\partial \Omega} \kappa (y)dS(y) = 2\pi N,\]
where $N$ is called the winding number.
\end{proposition}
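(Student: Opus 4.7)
The plan is to approximate by smooth closed curves and pass to the limit, using that $2\pi\mathbb{Z}$ is discrete. Specifically, I would apply Proposition \ref{approx} to obtain a sequence of $C^2$ curves $\gamma_n$ with tangent angles $\theta_n$ satisfying $\theta_n\to\theta$ in $L^1([0,1])$ and $\theta_n'\rightharpoonup\theta'$ in $(C[0,1])^*$. For each smooth closed $C^2$ curve the classical Gauss--Bonnet theorem applies: the $C^1$ closedness gives $e^{i\theta_n(0)}=e^{i\theta_n(1)}$, so $\theta_n(1)-\theta_n(0)=2\pi N_n$ for some integer $N_n$ (the rotation index), and therefore
\[
\int_0^1\theta_n'(s)\,ds=\theta_n(1)-\theta_n(0)=2\pi N_n.
\]
Testing the weak convergence of measures against the constant test function $\mathbf{1}\in C([0,1])$ then gives
\[
2\pi N_n=\int_0^1\theta_n'(s)\,ds \,\longrightarrow\,\int_0^1\theta'(s)\,ds,
\]
so the convergent sequence $\{2\pi N_n\}$ lies in the discrete set $2\pi\mathbb{Z}$ and must be eventually constant, equal to some $2\pi N$. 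This yields the claim and identifies $N$ as the common value of the rotation indices of the approximants.

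The one technical subtlety, which I expect to be the main obstacle, is that the mollified $\gamma_n$ produced by Proposition \ref{approx} need not be exactly closed: $\gamma_n(1)-\gamma_n(0)=L\int_0^1(e^{i\theta_n(r)}-e^{i\theta(r)})\,dr\to 0$ but is not identically zero. I would fix this by replacing $\gamma_n$ with $\tilde\gamma_n(s):=\gamma_n(s)-s\bigl(\gamma_n(1)-\gamma_n(0)\bigr)$, which is $C^2$, closed, and (for $n$ large, so that $|\gamma_n(1)-\gamma_n(0)|$ is smaller than the infimum of $|\gamma_n'|$) still regular. The corresponding tangent angle $\tilde\theta_n$ satisfies $\tilde\theta_n-\theta_n\to 0$ uniformly and $\tilde\theta_n'-\theta_n'\to 0$ in $(C[0,1])^*$, so the weak convergence $\tilde\theta_n'\rightharpoonup\theta'$ is preserved and the argument above applies to $\tilde\gamma_n$.

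Once the closedness is handled, the rest is essentially formal: everything reduces to the fact that weak-$*$ convergence of $\theta_n'$ to $\theta'$ allows pairing with the continuous function $\mathbf{1}$, and that a convergent sequence of integers is eventually constant. The geometric content, namely that the total turning is quantized, is inherited from the smooth case via this discreteness argument, with no need to analyze the structure of the possible jumps of $\theta$ directly.
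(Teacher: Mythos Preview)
Your proposal is correct and follows essentially the same strategy as the paper: approximate via Proposition~\ref{approx}, apply the smooth Gauss--Bonnet theorem to obtain $\int_0^1\theta_n'(s)\,ds = 2\pi N_n$, and pass to the limit by testing the weak-$*$ convergence $\theta_n'\rightharpoonup\theta'$ against the constant function $\mathbf{1}$, using the discreteness of $2\pi\mathbb{Z}$ to conclude the $N_n$ are eventually constant. In fact you are more careful than the paper on one point: the paper simply invokes Gauss--Bonnet for the mollified curves without checking that they close up, whereas your correction $\tilde\gamma_n(s)=\gamma_n(s)-s(\gamma_n(1)-\gamma_n(0))$ addresses precisely this gap.
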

\begin{proof}
 Let $\gamma_n$ be as in Proposition \ref{approx}. It follows from the Gauss-Bonnet Theorem for $C^2$ curves that
\[ \int_0^{1} \theta_n'(s) ds = 2\pi N_n,\]
for all $n$ where $N_n \in \mathbb{Z}$ must be bounded uniformly in $n$, since $\theta_n$ is uniformly bounded in $BV([0,1])$. Using Proposition \ref{approx} we have $\theta_n' \rightharpoonup \theta'$ weakly in $(C([0,1])^*$ allowing us
to pass to the limit in the above, implying that $N_n=N$ for some $N \in \mathbb{Z}$ for sufficiently large $n$. Thus $\int_0^1 \theta'(s)ds = \int_{\partial \Omega} \kappa(y)dS(y) = 2\pi N$.
\end{proof}

\begin{proposition}\label{cpareconvex}
 Let $\Omega$ be a critical point of \eqref{energy3} in the sense of Definition \ref{defcp}, admissible
in the sense of Definition \ref{addef}. Then there exists
$\bar \eta _{cr} > 0$ such that whenever $\bar \eta  < \bar \eta _{cr}$, $\Omega$ is convex.
\end{proposition}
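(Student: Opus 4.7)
My plan is to use the weak Euler-Lagrange equation from Proposition \ref{weakcomp} and reduce convexity to two claims: (a) the curvature has a fixed sign on each boundary component, and (b) no ``inner'' boundary components (i.e., holes of $\Omega$) exist. Both will follow from the same oscillation estimate for $\phi_\Omega$ together with Gauss-Bonnet, once we exploit that the Lagrange multiplier $\lambda$ is common to all components.

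First I would estimate the oscillation of the potential on $\partial\Omega$. For the logarithmic kernel, $\nabla\phi_\Omega(y) = -\frac{1}{2\pi}\int_\Omega \frac{y-x}{|y-x|^2}\,dx$, and Schwarz/Hardy-Littlewood symmetric rearrangement around $y$ gives the pointwise bound $\|\nabla\phi_\Omega\|_{L^\infty(\mathbb{R}^2)} \leq \sqrt{m/\pi}$ (the analogous bound scales as $m^{(2-\alpha)/2}$ in the Riesz case). Since $\Omega$ is connected with perimeter $L$, a standard projection argument shows $\mathrm{diam}(\overline\Omega) \leq L/2$: projecting $\partial\Omega$ onto the line through any two extremal points covers the segment between them at least twice. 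Combining these, for any two points $y_1,y_2\in\partial\Omega$ we have $|\phi_\Omega(y_1)-\phi_\Omega(y_2)| \leq C m^{1/2} L$, hence the oscillation of $\kappa = \lambda - \phi_\Omega$ on $\partial\Omega$ is controlled by $C m^{1/2} L$.

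Next I would apply Proposition \ref{gaussbonnet} on each component $\gamma_k$. Since the admissibility class forces each $\gamma_k$ to be simple and disjoint from the others, the winding number is $N_k = \pm 1$, so $\overline{\kappa_k} = 2\pi N_k/L_k$. With the oscillation bound above, one has $\|\kappa - \overline{\kappa_k}\|_{L^\infty(\gamma_k)} < |\overline{\kappa_k}|$ whenever $C m^{1/2} L < 2\pi/L_k$, which, since $L_k \leq L$, is implied by $C m^{1/2} L^2 < 2\pi$, i.e., by $\bar\eta < \bar\eta_{cr}$ for a suitable explicit $\bar\eta_{cr}$ (absorbing the logarithmic factor from \eqref{rescaled1}). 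Thus $\kappa$ has a definite sign on each component: positive on any component with $N_k = +1$ (the outer boundary), negative on any with $N_k = -1$ (a hole).

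Finally I rule out holes using the fact that all components share the same $\lambda$. If $\gamma_0$ is the outer boundary and $\gamma_k$ were an inner boundary, averaging $\lambda = \kappa + \phi_\Omega$ over each and subtracting yields
\[
\overline{\phi_\Omega(\gamma_k)} - \overline{\phi_\Omega(\gamma_0)} = \overline{\kappa_0} - \overline{\kappa_k} = \frac{2\pi}{L_0} + \frac{2\pi}{L_k} \geq \frac{4\pi}{L}.
\]
On the other hand, the same Lipschitz estimate on $\phi_\Omega$ together with $\mathrm{diam}(\overline\Omega)\leq L/2$ bounds the left-hand side by $Cm^{1/2}L$, forcing $\bar\eta \geq c > 0$ for an explicit $c$. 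Choosing $\bar\eta_{cr}$ smaller than both this threshold and the one from the previous step, no holes can exist; $\Omega$ is therefore simply connected with $\kappa>0$ everywhere on $\partial\Omega$, hence convex. The main obstacle is bookkeeping the constants uniformly across the logarithmic and Riesz cases and confirming that the $(1+|\log L|)$ factor built into $\bar\eta$ in \eqref{rescaled1} is sufficient to absorb the Lipschitz constant of $\phi_\Omega$ in the borderline log case; the geometric content, however, is contained entirely in Gauss-Bonnet plus the diameter-perimeter inequality.
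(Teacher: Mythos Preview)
Your strategy---bound the oscillation of $\phi_\Omega$ on $\partial\Omega$ and then combine Gauss--Bonnet with the common Lagrange multiplier---is sound and takes a somewhat different route from the paper's. The paper controls $\kappa-\bar\kappa_k$ through a pointwise bound $|\phi_\Omega|\le Cm(1+|\log L|)$ (resp.\ $CmL^{-\alpha}$) rather than through your Lipschitz estimate $\|\nabla\phi_\Omega\|_\infty\le\sqrt{m/\pi}$ combined with $\mathrm{diam}(\Omega)\le L/2$; and it rules out inner components by a direct sign analysis on each curve rather than by comparing averages of $\phi_\Omega$ across components as you do. In the logarithmic case your packaging is cleaner.

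There is, however, a genuine gap: you claim that admissibility forces each $\gamma_k$ to be simple, hence $N_k=\pm1$. Definition~\ref{addef} only requires the curves to be pairwise disjoint, not individually embedded; a disk with a figure-eight hole is admissible and has an inner boundary curve of turning number $0$. The paper accordingly \emph{proves} simplicity by ruling out $N_k=0$ and $|N_k|\ge2$ from the Euler--Lagrange equation: for $N_k=0$ one has $\bar\kappa_k=0$, and a separate mean-value argument on the tangent angle (equivalently Fenchel's inequality $\int|\kappa|\,ds\ge2\pi$) produces a point with $|\kappa|\ge\pi/L_k$, contradicting the oscillation bound. This is precisely the step your proposal omits, and nothing you wrote excludes $N_k=0$. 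A second, more quantitative issue arises in the Riesz case: rearrangement gives $\|\nabla\phi_\Omega\|_\infty\le C_\alpha m^{(1-\alpha)/2}$, so the condition you need is $m^{(1-\alpha)/2}L^2<C$, and this is \emph{not} controlled by $\bar\eta=m^{1/2}L^{2-\alpha}$: the ratio is $(L^2/m)^{\alpha/2}$, which is unbounded over the admissible class. The paper's pointwise bound yields instead the condition $mL^{1-\alpha}<C$, which after one use of the isoperimetric inequality is dominated by $C\bar\eta$. So for the Riesz kernel this is more than bookkeeping---the gradient route is too lossy to match the parameter in~\eqref{rescaled1}, and you would have to revert to a pointwise estimate on $\phi_\Omega$ itself.
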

\begin{proof}
Each $\gamma_k \in C^{3,\alpha}([0,1])$ by Proposition \ref{weakcomp}. Let $\gamma_k$ be any interior curve parameterizing $\partial \Omega_k$, ie. the interior of $\gamma_k$ is contained in the interior of some other curve $\gamma_j$ for $j \neq k$. By Gauss-Bonnet (cf. Proposition \ref{gaussbonnet}) $\dashint_{\partial \Omega_k} \kappa (y)dS(y) = \frac{2\pi N}{L_k}$ where $N$ is the winding number of the curve $\gamma_k$, and \eqref{realel} we have
\begin{equation}\label{EL1}
 \kappa(y) + \phi_{\Omega}(y) = \frac{2\pi N}{L_k} + \bar \phi_{\Omega}
\end{equation}
holds for $y \in \partial \Omega_k$ where the bar denotes average over $\partial \Omega_k$ and where $N \in \mathbb{Z}$. We wish
to show that $\gamma_k$ is a simple curve, ie. $N=-1$. 
There is a universal constant $C>0$ such that
\begin{align}\label{phibound}
 |\phi_{\Omega}(y)| &\leq C m(1+|\log L|) \textrm{ when } G(x,y) = -\frac{1}{2\pi} \log |x-y|\\
|\phi_{\Omega}(y)| &\leq C  \frac{m}{L^{\alpha}} \textrm{ when } G(x,y) = \frac{1}{|x-y|^{\alpha}}.\nonumber
\end{align}
Assume first that $N>0$. Then for $y \in \partial \Omega_k$ we deduce from \eqref{EL1}--\eqref{phibound} and $L_k \leq L$
\begin{align}
 \label{kineq1a} \kappa(y) &\geq\frac{1}{L} \( 2\pi N - C m L (|\log L|+1)\) \geq \frac{1}{L}(2\pi N - 2\sqrt{\pi} C \bar \eta ) \textrm{ when } G(x,y)=-\frac{1}{2\pi}\log|x-y|\\
\label{kineq2a}\kappa(y) &\geq \frac{1}{L}\(2\pi N - C m L^{1-\alpha}\) \geq \frac{1}{L}(2\pi N - 2\sqrt{\pi} C \bar \eta ) \textrm{ when } G(x,y)=\frac{1}{|x-y|^{\alpha}}.
\end{align}
where the second inequalities follow from the isoperimetric inequality $2\sqrt{\pi} m^{1/2} \leq L$. It is clear that when $\bar \eta $ is sufficiently small, $\kappa > 0$ for
all points on $\partial \Omega_k$, which is a clear contradiction since $\gamma_k$ was assumed to be an interior curve. When $N=0$ then we have once again from \eqref{EL1} and \eqref{phibound}
\begin{equation}\label{smallka} |L \kappa(y)| \leq C \bar \eta ,\end{equation}
for $C>0$ and all $y \in \partial \Omega_k$. Clearly there is always some $y \in \partial \Omega_k$ such that $\kappa(y) \geq \frac{\pi}{L_k}$. Indeed letting
$\gamma_k$ be a unit speed parametrization of $\partial \Omega_k$, we
restrict to $s \in [0,s_0] \subset [0,L_k]$ so that $0 \leq \theta_k(s) \leq 2\pi$. Then $\int_0^{s_1}  \theta_k'(s)ds = 2\pi$ and thus by the mean value
theorem, there exists an $s \in [0,s_0]$ such that $\theta_k'(s) = \frac{2\pi}{s_0} \geq \frac{\pi}{L_k}$ since $s_0 \in [0,L_k]$. 
 This contradicts \eqref{smallka} for $\bar \eta $ sufficiently small.
Arguing similarly when $N < 0$, we conclude that in this case, $\kappa < 0$ everywhere when $\bar \eta $ is sufficiently small and hence $\gamma_k$ is simple. Thus we have shown that each $\gamma_k$ is simple when
$\bar \eta $ sufficiently small since our estimates do not depend on $\gamma_k$. We now show that $\Omega$ must in fact be convex.

\noindent As before we have
\begin{equation}
 \kappa(y) + \phi_{\Omega}(y) = \frac{2\pi N}{L} + \bar \phi_{\Omega},
\end{equation}
where now the average is taken over all of $\partial \Omega$. Since each $\gamma_k$ is simple, $N \leq 1$ and we claim that in fact $N=1$.
First we show that $N \neq 0$. In  this case we have $\bar \kappa=0$ and thus from \eqref{EL1} and \eqref{phibound} 
\begin{equation}\label{smallk} |L \kappa(y)| \leq C \bar \eta ,\end{equation}
for $C>0$ and all $y \in \partial \Omega$. As before, there is always some $y \in \partial \Omega$ such that $\kappa(y) \geq \frac{\pi}{L_1}$,
where $L_1$ denotes the length of outer component of $\partial \Omega$, denoted as $ \partial \Omega_1$ (ie. the interior of $\gamma_k$ is contained
in the interior of $\gamma_1$ for all $k$). This is a contradiction
of \eqref{smallk} however since $L_1 \leq L$. To see that $N \geq 0$, assume that $N<0$. Then once again from \eqref{EL1} and \eqref{phibound}  
\begin{align}
 \label{kineq1} \kappa(y) &\leq\frac{1}{L} \( -2\pi N + C m L (|\log L|+1)\) \leq \frac{1}{L}(-2\pi N + 2\sqrt{\pi} C \bar \eta ) \textrm{ when } G(x,y)=-\frac{1}{2\pi}\log|x-y|\\
\label{kineq2}\kappa(y) &\leq \frac{1}{L}\(-2\pi N + C m L^{1-\alpha}\) \leq \frac{1}{L}(-2\pi N + 2\sqrt{\pi} C \bar \eta ) \textrm{ when } G(x,y)=\frac{1}{|x-y|^{\alpha}}.
\end{align}
By choosing $\bar \eta $ sufficiently small, then we would have $\kappa < 0$ everywhere on $\partial \Omega_1$ which is a contradiction, since $\gamma_1$ was assumed to be the exterior curve. Thus $N=1$ and $\Omega$ is simply connected, ie. $\gamma_1$ is simple.
Then we have by Proposition \ref{gaussbonnet} and \eqref{realel} again that
\begin{align}
 \label{kineq1a} \kappa(y) &\geq \frac{1}{L} \( 2\pi  - C \bar \eta \) \textrm{ when } G(x,y)=-\frac{1}{2\pi}\log|x-y|\\
\label{kineq2b}\kappa(y) &\geq \frac{1}{L}\(2\pi - C \bar \eta \) \textrm{ when } G(x,y)=\frac{1}{|x-y|^{\alpha}},
\end{align}
for all $y \in \partial \Omega$, showing that $\kappa > 0$ whenever $\bar \eta $ is chosen small enough. Thus $\Omega$ is convex when $\bar \eta $ is chosen sufficiently small.
\end{proof}
The assumption that $\bar \eta $ be sufficiently small is not simply a technical assumption, as Counter Example \ref{CE1} demonstrates.

\section{Area preserving mean curvature flow}\label{VPMCFsec}

We let $\Omega$ be a smooth, compact subset of $\mathbb{R}^2$ with boundary $\partial \Omega$. Letting $X_0$
be a local chart of $\Omega$ so that
\[ X_0: E \subset \mathbb{R}^{2} \to X_0(E) \subset \partial \Omega \subset \mathbb{R}^2.\]
Then we let $X(x,t)$ be the solution to the evolution problem
\begin{align}\label{VPMCF}
 \frac{\partial}{\partial_t} X(x, t) &= - (\kappa(t,x)- \bar \kappa(t)) \cdot \nu(x,t), \;\; x \in E, t \geq 0\\
\nonumber X(\cdot,0) &= X_0,
\end{align}
where $\kappa(t,x)$ is the mean curvature of $\partial \Omega$ at the point $x$, $\bar \kappa(t)$ is the average of the mean
curvature on $\partial \Omega_t$:
\begin{equation} \label{SAdec} \bar \kappa(t) = \dashint_{\partial \Omega_t} \kappa(y)dS(y),\end{equation}
$\nu(x,t)$ is the normal to $\partial \Omega_t$ at the point $x$ and $dS$ is the 1-dimensional Hausdorff measure. The flow \eqref{VPMCF} in dimensions $d \geq 3$ was first introduced by Huisken \cite{huisken} who established
existence and asymptotic convergence to round spheres for initially convex domains. The planar version for curves was introduced by Gage \cite{Gage}. \\

\noindent For convenience of notation we define
\begin{align}
 L &:= |\partial \Omega| \label{length}\\
A &:= |\Omega|.\label{area}
\end{align}
 It is easy to see that the surface area of $\partial \Omega$ is decreasing along the flow. Indeed differentiating the perimeter
we have
\begin{equation} \label{areadecrease} \frac{dL}{dt} = - \int_{\partial \Omega_t} (\kappa - \bar \kappa)^2 dS.\end{equation}
The introduction of the non-local term in \eqref{energy3} will create a term which competes with \eqref{areadecrease} along the flow \eqref{VPMCF} as the non-locality
favors the spreading of mass. The main element of the proof of Theorem \ref{main4} stated in Section \ref{mainresults} will therefore be to show that
when the mass is small, the decay in perimeter is sufficient to compensate for the increase in energy of the non-local term in \eqref{energy3} along
the flow \eqref{VPMCF}. This is where Theorem \ref{nonexistI} will play a crucial role. Before we proceed we must recall some now well known results about the flow \eqref{VPMCF}.

\medskip
\noindent  The main result of \cite{Gage} due to Gage
is the following:
\begin{theorem}(Global existence)\label{Huiskenexist}
 If $\Omega$ is convex, then the evolution equation \eqref{VPMCF} has a smooth solution $\Omega_t$
for all times $0 \leq t < \infty$ and the sets $\Omega_t$ converge in the $C^{\infty}$ topology to a round sphere, enclosing the same volume as $\Omega$, exponentially fast, as $t \to +\infty$. 
\end{theorem}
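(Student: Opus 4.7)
The plan is to treat \eqref{VPMCF} as a quasilinear parabolic evolution equation and extract global existence plus exponential $C^\infty$ convergence to a round circle through a chain of geometric a priori estimates, following Gage. I would begin with short-time existence: writing $\partial\Omega_t$ as a normal graph over the initial convex curve, \eqref{VPMCF} becomes a uniformly parabolic PDE for a height function, so standard parabolic theory yields a unique smooth solution on a maximal interval $[0,T_{\max})$ on which convexity persists by a maximum-principle argument applied to the curvature evolution
\[
\partial_t\kappa=\kappa_{ss}+\kappa^2(\kappa-\bar\kappa).
\]

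Next I would record the geometric invariants. While $\partial\Omega_t$ remains a simple convex closed curve, Gauss–Bonnet gives $\int_{\partial\Omega_t}\kappa\,dS=2\pi$, hence $\bar\kappa=2\pi/L$, and a direct computation from \eqref{VPMCF} yields $dA/dt=0$. Combined with \eqref{areadecrease} and the classical isoperimetric inequality this pins $L(t)$ inside $[2\sqrt{\pi A_0},L_0]$, bounded away from $0$ and from $\infty$.

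The main obstacle is to rule out finite-time curvature blow-up. Here I would invoke Gage's isoperimetric inequality for convex planar curves, $\int_{\partial\Omega_t}\kappa^2\,dS\geq \pi L/A$ with equality iff $\partial\Omega_t$ is a circle; together with \eqref{areadecrease} and the bounds on $L$, this forces $L(t)$ to decrease monotonically to $2\sqrt{\pi A_0}$ and makes $\int_0^{\infty}\!\!\int_{\partial\Omega_t}(\kappa-\bar\kappa)^2\,dS\,dt$ finite. Inserting these into the maximum principle for $\kappa_{\max}$ via the evolution equation above produces a uniform upper bound on $\kappa$, and Krylov–Schauder parabolic bootstrapping then promotes this to uniform $C^k$ bounds, which rules out $T_{\max}<\infty$ and extracts a subsequential $C^\infty$ limit as $t_n\to\infty$: a smooth convex curve of area $A_0$ that saturates the isoperimetric inequality, namely a circle $B_{r_\infty}$ of radius $r_\infty=\sqrt{A_0/\pi}$.

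For the exponential rate I would linearize \eqref{VPMCF} around $B_{r_\infty}$: writing $\partial\Omega_t$ as a normal graph $u(\theta,t)$ over the limiting circle gives
\[
\partial_t u = r_\infty^{-2}\bigl(u+\partial_\theta^2 u-\bar u\bigr)+O(u^2),
\]
whose principal part is a self-adjoint operator on $L^2(S^1)$. The area constraint eliminates the $k=0$ Fourier mode and translation invariance at the limit absorbs $k=1$, leaving a positive spectral gap on the remaining modes and yielding exponential $L^2$ decay of $u$. Combined with the uniform $C^k$ bounds already established and standard parabolic interpolation, this upgrades to exponential decay in every $C^k$ norm, completing the $C^\infty$-convergence statement.
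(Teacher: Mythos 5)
This theorem is not proved in the paper at all: it is quoted verbatim as the main result of Gage \cite{Gage} and used as a black box, so there is no in-paper argument to compare against. Your outline does reproduce the correct skeleton of Gage's actual proof — conservation of $A$, monotone decrease of $L$ via \eqref{areadecrease}, and the use of Gage's inequality $\int\kappa^2\,dS\ge \pi L/A$ to force $L^2-4\pi A\to 0$ — and the linearization/spectral-gap argument at the end is the standard route to the exponential rate (modulo the usual care needed to fix the center of the limiting circle when discarding the $k=1$ modes).

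The genuine gap is the uniform curvature bound, which you dispatch in one sentence but which is the heart of the matter. The maximum principle applied to $\partial_t\kappa=\kappa_{ss}+\kappa^2(\kappa-\bar\kappa)$ yields only the comparison ODE $\frac{d}{dt}\kappa_{\max}\le \kappa_{\max}^3$, which blows up in finite time; it does \emph{not} produce an upper bound. Nor does finiteness of $\int_0^\infty\!\!\int(\kappa-\bar\kappa)^2\,dS\,dt$ by itself exclude pointwise blow-up of $\kappa$ on a short spatial scale. Gage's proof of the curvature bound uses the convex structure in an essential way: Bonnesen's inequality together with the conserved area keeps the inradius bounded below and confines $\partial\Omega_t$ to a fixed annular region, and one then bounds $\kappa_{\max}$ through the support-function parametrization and integral estimates on convex curves (in the spirit of the median-curvature argument of Gage--Hamilton), before bootstrapping to $C^k$. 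A related, smaller issue: preservation of convexity is not a bare maximum-principle statement either, because $\bar\kappa$ is nonlocal and the reaction term $\kappa^2(\kappa-\bar\kappa)$ is negative where $\kappa<\bar\kappa$; one needs the ODE comparison $\frac{d}{dt}\kappa_{\min}\ge-\bar\kappa\,\kappa_{\min}^2$ (whose solutions starting positive stay positive for all time) rather than a sign argument at a zero of $\kappa$. As written, your sketch would not survive at the blow-up step without importing these additional geometric inputs.
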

\noindent In addition there is a local in time existence result in \cite{Gage}
\begin{theorem}(Local existence)\label{Huiskenexist2}
 If $\Omega$ is any smooth embedded set, then there exists a $T>0$ such that the evolution equation \eqref{VPMCF} has a smooth solution $\Omega_t$
for $t \in [0,T)$.
\end{theorem}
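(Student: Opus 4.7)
The plan is to convert the geometric evolution problem \eqref{VPMCF} into a quasilinear parabolic PDE for a scalar height function over the initial curve, and then appeal to standard short-time existence theory. Since $\partial\Omega$ is smooth and embedded, it admits a tubular neighborhood of some radius $r_0>0$ inside which every nearby embedded curve can be written uniquely as a normal graph over $\partial\Omega$. I would therefore seek $\partial\Omega_t$ in the form
\[
X(s,t) = X_0(s) + u(s,t)\,\nu_0(s), \qquad u(\cdot,0)\equiv 0,
\]
where $s$ is arclength on $\partial\Omega$ and $\nu_0$ is its outward unit normal.

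Next, I would substitute this ansatz into \eqref{VPMCF} and compute, using standard formulas for the curvature of a normal graph in the plane, that $u$ satisfies an equation of the form
\[
u_t \;=\; a(u,u_s)\,u_{ss} + b(s,u,u_s) \;-\; \bar{\kappa}(t)\sqrt{1+(\text{lower order in } u,u_s)},
\]
where $a(u,u_s)>0$ for $|u|+|u_s|$ sufficiently small, so that the principal part is uniformly parabolic near the initial data. The term $\bar{\kappa}(t)$ depends on $u(\cdot,t)$ through the integral \eqref{SAdec} over $\partial\Omega_t$, but it is a nonlocal zero-order correction that depends Lipschitz continuously on $u$ in the relevant Hölder norm.

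Then I would apply the standard short-time existence theorem for quasilinear parabolic equations on the compact manifold $\partial\Omega$, either via a Banach fixed-point argument on the linearized equation in parabolic Hölder spaces $C^{2+\alpha,1+\alpha/2}$, or equivalently by the analytic semigroup/maximal regularity approach. This yields a classical solution $u\in C^{2+\alpha,1+\alpha/2}(\partial\Omega\times[0,T))$ for some $T>0$, which can be chosen small enough that $\|u(\cdot,t)\|_{C^1}<r_0/2$, so that the normal graph representation remains valid. Smoothness in space-time on $(0,T)$ follows by a standard parabolic bootstrap argument, since the coefficients of the equation are smooth in their arguments. Finally, one recovers $\Omega_t$ as the interior of the curve parameterized by $X(\cdot,t)$, giving a smooth solution of \eqref{VPMCF} on $[0,T)$.

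The main (and only genuine) difficulty beyond textbook parabolic short-time existence is the nonlocal term $\bar{\kappa}(t)$, which couples all points on the curve simultaneously. I would handle this by noting that the map $u\mapsto \bar{\kappa}[u]$ is Lipschitz continuous from a small ball in $C^{1+\alpha/2}([0,T];C^{2+\alpha}(\partial\Omega))$ into $C^{\alpha/2}([0,T])$; this is enough to absorb it into the contraction-mapping iteration, since a zero-order nonlocal Lipschitz perturbation does not affect parabolicity. The remaining arguments are routine and omitted.
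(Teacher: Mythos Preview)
The paper does not give its own proof of this statement: Theorem~\ref{Huiskenexist2} is simply quoted from Gage~\cite{Gage} as an external input, so there is no argument in the paper to compare against. Your outline is the standard route to short-time existence for curvature flows and is correct in spirit; in particular, writing nearby curves as normal graphs over $\partial\Omega$, reducing to a quasilinear parabolic scalar equation, and closing a contraction in parabolic H\"older spaces is exactly how such results are proved.

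One remark that simplifies your treatment of the nonlocal term: since $\partial\Omega_t$ is an embedded closed curve, Gauss--Bonnet gives $\int_{\partial\Omega_t}\kappa\,dS=2\pi$, so in fact $\bar\kappa(t)=2\pi/L(t)$ depends only on the length and hence only on $u$ and $u_s$, not on $u_{ss}$. This makes the map $u\mapsto\bar\kappa[u]$ a genuinely lower-order (first-order) nonlocal term, and the Lipschitz estimate you need is immediate without any delicate bookkeeping of second derivatives. With that observation the fixed-point argument is entirely routine, and your bootstrap to $C^\infty$ is standard. Aside from this minor streamlining, nothing is missing from your sketch.
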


\noindent Finally we are able to state the main result concerning the flow \eqref{VPMCF} and the energy \eqref{energy3}.

\section{Main Results}\label{mainresults}

Our main result, from which the other results follow, is the following.
\begin{theorem} (Energy decrease along the flow) \label{main4} Let $\Omega$ be admissible in the sense of Definition \ref{addef} with $\kappa \in L^2(\partial \Omega)$ and denote $\Omega_t$ the evolution of $\Omega$ under \eqref{VPMCF}.
\begin{itemize}
 \item If $U = \mathbb{R}^2$ and $\Omega$ is convex, then there exists $\bar \eta _{cr} > 0$ such that whenever $\bar \eta  < \bar \eta _{cr}$ it holds that
\begin{equation}\label{energydec}
 \frac{dE(\Omega_t)}{dt}\big|_{t=0} < 0,
\end{equation}
with $E$ defined by \eqref{energy3}.
\item If $U=\mathbb{T}^2$ and $\Omega$ is either convex or homeomorphic to $S_{n=1}$ then there
 exists an $\bar \gamma_{cr}=\bar\gamma_{cr} > 0$ such that whenever $\bar \gamma < \bar \gamma_{cr}$ \eqref{energydec} holds.
\end{itemize}
 Moreover lower bounds for $\bar \eta _{cr}$ and $\bar \gamma_{cr}$ are given by 
\[ \max_{\Omega \subset \mathbb{R}^2} \(\bar \eta ^2\frac{\int_{\partial \Omega}(\kappa - \bar \kappa)^2 dS}{ L \|\phi - \bar \phi\|_{L^{\infty}(\partial \Omega)}}\)^{\frac{1}{2}} \textrm{ and }\;\; \max_{\Omega \subset \mathbb{T}^2} \(\bar \gamma^2\frac{\int_{\partial \Omega}(\kappa - \bar \kappa)^2 dS}{ L \|\phi - \bar \phi\|_{L^{\infty}(\partial \Omega)}}\)^{\frac{1}{2}} \]
respectively, where the maximum is taken over convex sets for the first expression, and over convex sets and sets homeomorphic to $S_{n=1}$ in the latter.

\begin{itemize} \item If $\Omega$ is in addition assumed to be smooth, then there exists a constant a $T>0$ and $C>0$ depending only on $\bar \eta _{cr}$ (or $\bar \gamma_{cr}$) such that
\begin{equation}\label{Edecay2}
 \frac{dE(\Omega_t)}{dt} \leq -C \int_{\partial \Omega_t} (\kappa - \bar \kappa)^2 dS_{\Omega_t}
\end{equation}
for all $t \in [0,T)$. When $U=\mathbb{R}^2$ and $\Omega$ is convex, $T=+\infty$. 
\end{itemize}
\end{theorem}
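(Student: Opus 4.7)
The plan is to differentiate $E(\Omega_t)$ at $t=0$ and show that the $L^2$-decrease of perimeter strictly dominates the cross-term coming from the non-local energy, with Theorem \ref{nonexistI} providing the decisive control. Since the normal velocity under \eqref{VPMCF} is $V = -(\kappa - \bar\kappa)$ and is mean-zero, equation \eqref{areadecrease} gives the perimeter contribution $-\int_{\partial\Omega}(\kappa-\bar\kappa)^2\,dS$, while standard shape differentiation of the non-local term yields
\[ \frac{d}{dt}\iint_{\Omega_t\times\Omega_t} G \,\Big|_{t=0} \;=\; 2\int_{\partial\Omega}\phi_\Omega\, V\, dS \;=\; -2\int_{\partial\Omega}(\phi_\Omega-\bar\phi_\Omega)(\kappa-\bar\kappa)\,dS, \]
where the second equality uses $\int V\,dS=0$. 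The mean-zero subtraction is essential, since $\phi_\Omega - \bar\phi_\Omega$ is precisely the quantity bounded by Theorem \ref{nonexistI}. Combining,
\[ \frac{dE(\Omega_t)}{dt}\Big|_{t=0} \;=\; -\int_{\partial\Omega}(\kappa-\bar\kappa)^2\,dS \;-\; 2\gamma\int_{\partial\Omega}(\phi_\Omega-\bar\phi_\Omega)(\kappa-\bar\kappa)\,dS. \]

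Next I would estimate the cross-term by Cauchy--Schwarz followed by Theorem \ref{nonexistI}:
\[\Big|\!\int_{\partial\Omega}(\phi-\bar\phi)(\kappa-\bar\kappa)\,dS\Big| \;\leq\; \|\phi-\bar\phi\|_{L^\infty(\partial\Omega)}\,L^{1/2}\Big(\!\int_{\partial\Omega}(\kappa-\bar\kappa)^2\, dS\Big)^{1/2} \;\leq\; \sqrt{CL}\int_{\partial\Omega}(\kappa-\bar\kappa)^2\,dS,\]
with $C = C(L,\alpha)$ the constant from Theorem \ref{nonexistI}. Hence
\[\frac{dE(\Omega_t)}{dt}\Big|_{t=0} \;\leq\; -\bigl(1-2\gamma\sqrt{CL}\bigr)\int_{\partial\Omega}(\kappa-\bar\kappa)^2\,dS,\]
so strict decrease follows whenever $4\gamma^2 CL < 1$. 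Substituting the explicit form of $C$ together with the potential bound \eqref{phibound} and the isoperimetric inequality $L \geq 2\sqrt{\pi m}$ turns this smallness condition into $\bar\eta < \bar\eta_{cr}$ on $\mathbb{R}^2$ or $\bar\gamma < \bar\gamma_{cr}$ on $\mathbb{T}^2$; the displayed suprema in the theorem statement are precisely the worst-case values of the ratio at which the above inequality becomes tight, obtained by rearranging $4\gamma^2 L\|\phi-\bar\phi\|_{L^\infty}^2 < \int(\kappa-\bar\kappa)^2\,dS$ and optimizing over the admissible class. Because the argument is pointwise in time, the refined bound \eqref{Edecay2} holds on the entire smooth existence interval: Theorem \ref{Huiskenexist2} supplies short-time smoothness for smooth $\Omega$, and when $U = \mathbb{R}^2$ with $\Omega$ convex, Gage's Theorem \ref{Huiskenexist} extends the flow to $t = +\infty$, the smallness condition being preserved because $L(t)$ is non-increasing by \eqref{areadecrease}. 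The $\mathbb{T}^2$ cases are handled identically on the maximal smooth interval, using the additional fact that the homeomorphism type of $\Omega$ is invariant under short-time smooth evolution.

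The main obstacle is two-fold. First, the initial first variation must be justified for sets that are only admissible in the sense of Definition \ref{addef} with $\kappa \in L^2(\partial\Omega)$, rather than smooth; I would approximate the boundary by $C^2$ curves via Proposition \ref{approx} and pass to the limit in the relevant boundary integrals, the $L^2$ hypothesis on $\kappa$ being precisely what ensures convergence of $\int(\kappa - \bar\kappa)^2\,dS$ along the approximation. Second and more delicate is propagating the explicit dependence of $C(L,\alpha)$ from Theorem \ref{nonexistI} through the estimate and matching the resulting condition against the rescaled parameters in \eqref{rescaled1}--\eqref{rescaled2} — in particular extracting the $(1+|\log L|)$ and $L^{-\alpha}$ factors from \eqref{phibound} to recover precisely the stated form of the thresholds $\bar\eta_{cr}$ and $\bar\gamma_{cr}$, and separately verifying that the argument goes through on $\mathbb{T}^2$ for sets homeomorphic to the stripe $S_{n=1}$ (where Theorem \ref{nonexistI} is explicitly stated to continue to hold).
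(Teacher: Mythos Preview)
Your proposal is correct and follows essentially the same route as the paper: compute the first variation along the flow, exploit the mean-zero velocity to replace $\phi_\Omega$ by $\phi_\Omega-\bar\phi_\Omega$, apply Cauchy--Schwarz together with Theorem~\ref{nonexistI}, and read off the smallness condition from the explicit form of $C(L,\alpha)$, invoking Theorems~\ref{Huiskenexist} and~\ref{Huiskenexist2} for the time-global and time-local statements respectively. The only cosmetic discrepancy is the factor of $2$ in your cross-term (from differentiating the symmetric double integral), which the paper suppresses in its normalization; this shifts the numerical threshold but does not affect the argument.
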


\begin{remark}
From Theorem \ref{nonexist} below, we will see that a lower bound for $\bar \eta _{cr}$ is given by $\frac{32}{\pi}$ for $G(x,y)=-\frac{1}{2\pi} \log |x-y|$
and $\frac{8}{\pi}\(1 + \frac{1}{\pi}\)^{1-\alpha}$ for $G(x,y)=\frac{1}{|x-y|^{\alpha}}$. Also a lower bound for $\bar \gamma_{cr}$
when $U=\mathbb{T}^2$ is given by a universal constant $C_0>0$ depending only on the Green's function for $\mathbb{T}^2$. 
\end{remark}

%\begin{counter}\label{CE2}
% When $U=\mathbb{R}^2$ then there exists an annuli $\Omega := \{ x : r < |x| < R\}$ such that 
%\[ \frac{dE(\Omega_t)}{dt}\big|_{t=0} = C(\bar \eta  - c),\]
%for constants $c=c(\alpha)$, $C=C(\alpha) > 0$. In particular, there exists a $\bar \eta _{cr,2} > \bar \eta _{cr}$ such that Theorem
%\ref{main4} is false for $\bar \eta  > \bar \eta _{cr,2}$. 
%\end{counter}
\medskip
\noindent Using this result we can immediately prove Corollary \ref{R2stable}.\\

\noindent {\sc Proof of Corollary \ref{R2stable}}
Using Proposition \ref{weakcomp}, we can invoke
\eqref{Edecay2} along with \eqref{areadecrease} to conclude the desired inequality. 
 $\Box$

\section{Geometric inequalities}\label{stability}
 
The main result of this section is a geometric inequality which relates the closeness of connected constant curvature curves and equipotential
curves. This will be used in Section \ref{finalproof} to control the non-local term along the flow \eqref{VPMCF}. 

\begin{theorem}\label{nonexist}

Let $\Omega \subset U$ be convex with $\kappa \in L^2(\partial \Omega)$ and $U=\mathbb{R}^2$ or $U=\mathbb{T}^2$. Then there is a constant $C=C(L,\alpha)>0$ such that
\begin{equation}\label{gineq}\|\phi_{\Omega} - \bar \phi_{\Omega}\|_{L^{\infty}(\partial \Omega)}^2 \leq C \int_{\partial \Omega}|\kappa - \bar \kappa|^2dS.\end{equation}  
\begin{itemize}
\item[i)] When $U=\mathbb{R}^2$ and $G(x,y)=-\frac{1}{2\pi} \log |x-y|$, the constant can be chosen to be $C=C_{\mathbb{R}^2}=\frac{32AL^3}{\pi} (1+|\log L|)^2 = \frac{32}{\pi L} \bar \eta ^2$.
\item[ii)] When $U=\mathbb{R}^2$ and $G(x,y)=\frac{1}{|x-y|^{\alpha}}$ for $\alpha \in (0,1)$, the constant can be chosen to be
$C=\frac{8A}{\pi} \(1+\frac{1}{\pi}\)^{2(1-\alpha)} L^{3-2\alpha}=\frac{8}{\pi L} \bar \eta ^2$.
\item[iii)] When $U=\mathbb{T}^2$ the constant
can be chosen to be $C=C_{\mathbb{T}^2} = C_0 L^3(1+ |\log L|^2)^2 = \frac{C_0}{\gamma L} \bar \gamma^2 $ where $C_0>0$ is some universal constant. Moreover the inequality continues to hold
for sets homeomorphic to $S_{n=1}$.
\end{itemize}                                                                                            
\end{theorem}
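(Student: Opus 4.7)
The strategy is to reduce the $L^\infty$ oscillation of $\phi$ on $\partial\Omega$ to an $L^2$ estimate for the tangential gradient $\partial_s\phi$, then convert that gradient to a single-layer boundary integral via the divergence theorem, and finally exploit convexity together with Gage's isoperimetric inequality to extract the dependence on $\kappa - \bar\kappa$.

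First I would parametrize $\partial\Omega$ by arc length $\gamma:[0,L]\to\partial\Omega$ and apply Cauchy-Schwarz along the shorter arc joining any two boundary points $y_1,y_2$:
\begin{equation}
|\phi(y_1)-\phi(y_2)|^2 \leq \tfrac{L}{2}\int_{\partial\Omega}|\partial_s\phi|^2\, dS.
\end{equation}
Since $\|\phi-\bar\phi\|_{L^\infty(\partial\Omega)}\leq \mathrm{osc}(\phi)$, the problem reduces to bounding the right-hand side by a multiple of $\int_{\partial\Omega}(\kappa-\bar\kappa)^2\, dS$. Because the kernels under consideration are translation-invariant one has $\nabla_y G(x,y)=-\nabla_x G(x,y)$, and the divergence theorem (whose application requires integrability of $G$ near $x=y$, which holds for the logarithmic kernel in $\mathbb{R}^2$ and for Riesz exponents $\alpha\in(0,1)$) yields the single-layer representation $\nabla\phi(y)=-\int_{\partial\Omega}G(x,y)n(x)\, dS(x)$. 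Contracting with the unit tangent $T(y)$ gives
\begin{equation}
\partial_s\phi(y)=-\int_{\partial\Omega}G(x,y)[T(y)\cdot n(x)]\, dS(x).
\end{equation}
Since $\int_{\partial\Omega}n\, dS=0$ by Stokes, I may freely replace $G(x,y)$ by $G(x,y)-c(y)$ for any $c(y)$; I would choose $c(y)$ equal to the boundary average of $G(x,y)$ in $x$.

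Next I would exploit convexity to extract $\kappa-\bar\kappa$. With $x=\gamma(s)$ and $y=\gamma(s_0)$ one has $T(y)\cdot n(x)=\sin(\theta(s_0)-\theta(s))$ where $\theta'(s)=\kappa(s)$. For a disk, $\kappa\equiv\bar\kappa$ and a direct computation confirms $\partial_s\phi\equiv 0$, consistent with $\phi$ being constant on $\partial B$ by Newton's theorem. Writing $\theta(s)-\theta(s_0)=\bar\kappa(s-s_0)+\Delta(s)$ with $\Delta(s)=\int_{s_0}^s(\kappa-\bar\kappa)\, d\sigma$, Cauchy-Schwarz gives the pointwise bound $|\Delta(s)|\leq |s-s_0|^{1/2}\|\kappa-\bar\kappa\|_{L^2}$. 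Expanding $\sin(\theta(s_0)-\theta(s))$ around the circular baseline $\sin(\bar\kappa(s_0-s))$ and noting that the baseline contribution to $\partial_s\phi$ vanishes by the disk calculation, the remainder is pointwise controlled by $|\Delta(s)|$ times $|G(x,y)-\bar G(y)|$. A subsequent Cauchy-Schwarz in the boundary integral followed by integration in $s_0$ gives
\begin{equation}
\int_{\partial\Omega}|\partial_s\phi|^2\, dS \leq C(A,L)\int_{\partial\Omega}(\kappa-\bar\kappa)^2\, dS,
\end{equation}
where $C(A,L)$ arises from a weighted double integral of $|G(x,y)-\bar G|^2$ on $\partial\Omega\times\partial\Omega$. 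For the logarithmic kernel, the convex comparability of chord and arc-length distances, $|x-y|\geq c\, d_{\partial\Omega}(x,y)$, combined with Gage's isoperimetric inequality $\int\kappa^2\, ds\geq \pi L/A$ to estimate the relevant support-function integrals, gives $C(A,L)\leq \tfrac{64 AL^2}{\pi}(1+|\log L|)^2$; multiplying by the initial $L/2$ factor recovers $C_{\mathbb{R}^2}=\tfrac{32AL^3}{\pi}(1+|\log L|)^2$.

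The hardest part will be the sharp bookkeeping of constants in the nested Cauchy-Schwarz estimates, in particular the convex-geometric bound $\int(\log|x-y|)^2\, dS(x)\leq CL(1+|\log L|)^2$ and the precise distribution of Cauchy-Schwarz factors needed to recover the $AL^3$ dependence rather than the cruder $L^3$ bound of the remark following the theorem. For the torus case, $G_{\mathbb{T}^2}$ decomposes as $-\tfrac{1}{2\pi}\log|x-y|$ plus a smooth periodic correction, so the same scheme applies with the correction contributing only a bounded additional term, producing $C_{\mathbb{T}^2}=C_0L^3(1+|\log L|^2)^2$; sets homeomorphic to $S_{n=1}$ are handled by carrying out the same argument on each boundary component, using that the relevant angle parametrization remains available. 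For the Riesz kernel with $\alpha\in(0,1)$, the analogous computation replaces the logarithmic weight by $|x-y|^{-\alpha}$ and uses the convex estimate $\int|x-y|^{-2\alpha}\, dS(x)\leq CL^{1-2\alpha}$, yielding $\tfrac{8A}{\pi}(1+1/\pi)^{2(1-\alpha)}L^{3-2\alpha}$.
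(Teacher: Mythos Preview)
Your approach differs substantially from the paper's and has a genuine gap at the baseline-cancellation step.

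The paper does not differentiate $\phi$ along $\partial\Omega$. It routes everything through the isoperimetric deficit: first (Proposition~\ref{isopermineq}) for convex $\Omega$
\[
L-2\sqrt{\pi}A^{1/2}\le \frac{A}{\pi}\int_{\partial\Omega}(\kappa-\bar\kappa)^2\,dS,
\]
via the support-function identity $\int p\kappa\,dS=L$, Gauss--Bonnet, and Gage's inequality $\int p^2\,dS\le LA/\pi$; then separately (Proposition~\ref{equicontrol}) it proves $\|\phi-\bar\phi\|_{L^\infty(\partial\Omega)}^2\le C(L^2-4\pi A)$ by sandwiching $\Omega$ between its inscribed and circumscribed balls, computing their potentials explicitly, and applying Bonnesen's inequality $L^2-4\pi A\ge\pi^2(R_{\mathrm{out}}-R_{\mathrm{in}})^2$. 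Chaining the two gives the theorem; on the torus the stripe case replaces balls by rectangles $S_{L_{\mathrm{in}}},S_{L_{\mathrm{out}}}$.

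The gap in your argument: you claim that the baseline integral
\[
\int_0^L\bigl[G(\gamma(s),\gamma(s_0))-c(s_0)\bigr]\sin\bigl(\bar\kappa(s_0-s)\bigr)\,ds
\]
vanishes ``by the disk calculation.'' It does not. On the disk that integral is zero because $|\gamma(s)-\gamma(s_0)|=2R\,|\sin(\bar\kappa(s-s_0)/2)|$ depends only on $s-s_0$, so $G$ is even in $s-s_0$ while $\sin(\bar\kappa(s_0-s))$ is odd. On a general convex $\partial\Omega$ the chord $|\gamma(s)-\gamma(s_0)|$ is \emph{not} a function of $s-s_0$ alone, so no parity cancellation occurs and this baseline term is of the same order as the quantity you are trying to estimate. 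To salvage the argument you would also have to expand $|\gamma(s)-\gamma(s_0)|$ about its circular value and control that deviation by $\|\kappa-\bar\kappa\|_{L^2}$, which requires a second integration of $\kappa-\bar\kappa$ and then passing the resulting bound through the logarithm---work you have not done, and which would make the claimed explicit constants very hard to recover. A secondary issue: the chord--arc comparability $|x-y|\ge c\,d_{\partial\Omega}(x,y)$ you invoke is false for convex curves (a thin rectangle gives ratio tending to zero), so your route to the kernel bound $\int(\log|x-y|)^2\,dS\le CL(1+|\log L|)^2$ also needs a different justification.
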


We separate the case of $U=\mathbb{R}^2$ and $U=\mathbb{T}^2$ since the ability to estimate the distance to an equipotential surface
in terms of the distance to a constant curvature surface (with some abuse of terminology hereafter denoted CMC surface) depends on the types of CMC surfaces which may exist. In $\mathbb{R}^2$ we first
control the isoperimetric deficit in terms of the right hand side of \eqref{gineq}, then control the left hand side of \eqref{gineq} by the
isoperimetric deficit. This relies crucially on the fact that the ball is the only compact, connected CMC surface in $\mathbb{R}^2$. In
$\mathbb{T}^2$ we must account for the possibility of the stripe pattern defined by \eqref{R.6} and need to adapt the inequalities accordingly. Therefore
while the approaches are almost identical, separate analysis is needed. The Euclidean case will be treated below in Section \ref{Rn} and the case
of the torus will be treated in Section \ref{Torus}. We will make repeated use of Bonnesen's inequality \cite{bonnesen}, which states that given
any set simply connected $\Omega \subset \mathbb{R}^2$, it holds that
\begin{equation} \label{strong2}
 L^2 - 4\pi A \geq \pi^2 (R_{\textrm{out}}(\Omega)-R_{\textrm{in}}(\Omega))^2,
\end{equation}
where $R_{\textrm{in}}$ and $R_{\textrm{out}}$ are
\begin{align}\label{inout}
 R_{\textrm{in}}(\Omega) &= \sup_{B_r \subset \Omega} r\\
R_{\textrm{out}}(\Omega) &= \inf_{\Omega \subset B_R} R.\nonumber
\end{align}
See Figure 1 for a diagram showing the various quantities. 
\subsection{The Euclidean case $U=\mathbb{R}^2$}\label{Rn}
Let $X(s)=(x(s),y(s)) \in \mathbb{R}^2$ be a unit speed parametrization of the curve enclosing $\Omega$, $\nu$ the normal to the surface at the point $X(s)$ and $p(s) := \langle X(s), -\nu(s) \rangle$ the support function.
We will also use $A$ instead of $m$ for the area to emphasize the geometric nature of the inequalities, although the two are equivalent.

\begin{figure}
 \centering

\includegraphics[scale=0.3]{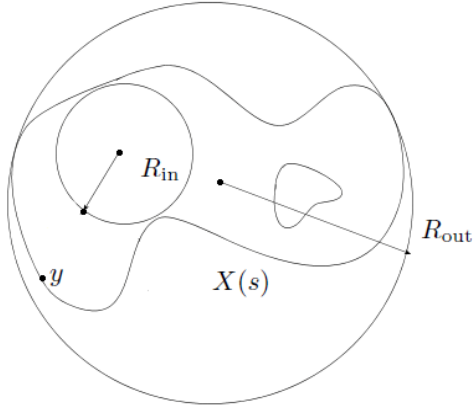}
\caption[Test]{The boundary of $\Omega$ is parametrized by $X(s)$ and has inner and outer radii $R_{in}$ and $R_{out}$.}
\end{figure}
\begin{proposition}\label{isopermineq} Let $\Omega \subset \mathbb{R}^2$ be convex with $\kappa \in L^2(\partial \Omega)$. Then
it holds that
 \begin{equation}\label{isoineq2}
  L - 2\sqrt{\pi} A^{1/2} \leq \frac{A}{\pi} \int_{\partial \Omega}( \kappa - \bar \kappa)^2. 
 \end{equation}

\end{proposition}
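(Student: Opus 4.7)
The plan is to reduce the claim to Gage's isoperimetric inequality for convex plane curves, which states that for a convex $\Omega \subset \mathbb{R}^2$ with $\kappa \in L^2(\partial\Omega)$,
\[
\int_{\partial\Omega} \kappa^2 \, dS \;\geq\; \frac{\pi L}{A}.
\]
This is exactly the tool alluded to in the paper's reference to Gage \cite{Gage2}, so I would simply invoke it.

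First I would compute $\bar\kappa$. By Proposition \ref{gaussbonnet} (or equivalently, since $\Omega$ is convex and hence $\partial\Omega$ is a simple closed curve with winding number $1$), one has $\int_{\partial\Omega}\kappa\,dS = 2\pi$, so $\bar\kappa = 2\pi/L$. Expanding the square then gives
\[
\int_{\partial\Omega}(\kappa - \bar\kappa)^2\,dS \;=\; \int_{\partial\Omega}\kappa^2\,dS - \frac{4\pi^2}{L}.
\]

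Next I would apply Gage's inequality to the first term on the right, obtaining
\[
\int_{\partial\Omega}(\kappa - \bar\kappa)^2\,dS \;\geq\; \frac{\pi L}{A} - \frac{4\pi^2}{L} \;=\; \frac{\pi\,(L^2 - 4\pi A)}{A L}.
\]

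Finally I would factor the classical isoperimetric deficit as $L^2 - 4\pi A = (L - 2\sqrt{\pi A})(L + 2\sqrt{\pi A})$ and use the trivial bound $L + 2\sqrt{\pi A} \geq L$ (both terms are nonnegative). Substituting gives
\[
\int_{\partial\Omega}(\kappa-\bar\kappa)^2\,dS \;\geq\; \frac{\pi(L-2\sqrt{\pi A})(L+2\sqrt{\pi A})}{AL} \;\geq\; \frac{\pi}{A}\bigl(L - 2\sqrt{\pi A}\bigr),
\]
which rearranges to \eqref{isoineq2}. There is no real obstacle here beyond having Gage's inequality available; the whole proof is essentially algebraic manipulation of three ingredients (Gauss--Bonnet, Gage, and factoring the deficit).
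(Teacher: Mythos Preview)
Your proof is correct, but it takes a different route from the paper's. The paper works with the support function $p(s)=\langle X(s),-\nu(s)\rangle$: from the identity $\int_{\partial\Omega} p\,\kappa\,dS = L$ and $\int p\,dS = 2A$ one gets $L - 4\pi A/L = \int_{\partial\Omega} p(\kappa-\bar\kappa)\,dS$, then Cauchy--Schwarz together with Gage's \emph{support-function} inequality $\int p^2\,dS \leq LA/\pi$ (valid for convex sets symmetric about the origin of $p$) yields the estimate; a separate reflection argument handles the non-symmetric case. You instead invoke Gage's \emph{curvature} inequality $\int\kappa^2\,dS \geq \pi L/A$, which already holds for all convex curves, and the rest is algebra. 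Your argument is shorter and bypasses the symmetrization step entirely---indeed, Gage himself deduces the curvature inequality from the support-function inequality via essentially the same reflection trick, so you are quoting his endpoint rather than his midpoint. The paper's longer route has one advantage, however: the support-function machinery is what carries over to the torus case (Proposition~\ref{isopermineqT}), where no curvature inequality of Gage type is available and one must estimate $\int (p-\bar p)^2$ directly.
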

\begin{proof}
Using the generalized Gauss-Green theorem we have 
\begin{equation}
 A = \int_{\Omega} \det DX(x,y) dxdy  = \frac{1}{2}\int_{\partial \Omega} \langle X, - \nu \rangle dS = \frac{1}{2} \int_0^L p(s) ds,
\end{equation}
where we set $p(s)=\langle X(s), \nu(s)$ and $\kappa(s) = X''(s) \cdot \nu(s)$. In addition we have
\begin{align}
 \int_{\partial \Omega} p \kappa dS = \int_0^L p(s) \kappa(s) ds &= -\int_0^L \langle X(s), \kappa \nu(s) \rangle ds = -\int_0^L \langle X(s), X''(s) \rangle ds \nonumber \\
&= - \langle X, X' \rangle \big|_0^L + \int_0^L \langle X'(s), X'(s) \rangle ds \nonumber \\
&= L.\label{GageCalc}
\end{align}
Adding and subtracting $\bar \kappa$ to $\kappa$ in the integrand on the left side of \eqref{GageCalc}, and using the Gauss-Bonnet theorem for curves (cf. Proposition \ref{gaussbonnet}), we find
\begin{equation}\label{SCeqn}
 L - \frac{4\pi A}{L} = \int_{\partial \Omega} p (\kappa - \bar \kappa) dS.
\end{equation}

We first prove the inequality when $\Omega$ is symmetric about the origin of $p$. Adding and subtracting $\bar p$ from $p$ in \eqref{SCeqn} we have
\begin{align}\label{est4}
\frac{L^2-4\pi A}{L} &\leq \int_{\partial \Omega} (p - \bar p) (\kappa - \bar \kappa) dS\\
&\leq \sqrt{ \int_{\partial \Omega} (p- \bar p)^2 dS} \sqrt{ \int_{\partial \Omega} (\kappa - \bar \kappa)^2 dS},\nonumber
\end{align}
where we've applied Cauchy-Schwarz on the last line. A simple calculation yields
\begin{equation}\label{pfunc}
 \int (p - \bar p)^2 = \int p^2 - \frac{4A^2}{L}
\end{equation}
Using an inequality due to Gage \cite{Gage2} for convex sets symmetric about the origin, we have 
\begin{equation}\label{gageineq}
 \int p^2 dS \leq \frac{LA}{\pi}.
\end{equation}
Inserting \eqref{gageineq} into \eqref{pfunc} we have
\begin{equation}
 \int (p-\bar p)^2 dS \leq \frac{A}{\pi L} \(L^2 - 4\pi A\).
\end{equation}

\noindent Inserting the above into \eqref{est4} we have
\begin{align} \sqrt{L^2-4\pi A} &\leq \sqrt{\frac{AL}{\pi}} \sqrt{ \int_{\partial \Omega} (\kappa - \bar \kappa)^2 dS}.
\end{align}
 Squaring both sides, dividing both sides by $L$ and using $L \leq 2\sqrt{\pi} A^{1/2}$ we have
\begin{align} L - 2\sqrt{\pi} A^{1/2} &\leq \frac{A}{\pi} \int_{\partial \Omega} (\kappa - \bar \kappa)^2 dS.
\end{align}
When $\Omega$ is not symmetric about the origin, then choose a point $O \in \Omega$ and any straight line passing through $O$. Then this line will divide the set $\Omega$ into two segments, $\Omega_1$ and $\Omega_2$. We claim it is always possible to choose this line so that $\Omega_1$ and $\Omega_2$ have the same area. Indeed
let $F(\Omega, \theta) := |\Omega_1| - |\Omega_2|$ where $\theta$ denotes the angle of the line with respect to some fixed axis. If $F(\Omega,0)=0$ then we are done. Otherwise $F(\Omega,0) > 0$ without loss of generality. But then $F(\Omega,2\pi) < 0$ and hence by the intermediate value theorem we conclude there exists a $\theta \in (0,2\pi)$ such that $F(\Omega,\theta)=0$. Without loss of generality we may orient this line to be
 parallel to the x axis. Let $\Omega_1'$ and $\Omega_2'$ be the sets formed by reflection across
the origin. Then we can apply \eqref{isoineq2} and we obtain for $i=1,2$
\[ L_i' - 2 \sqrt{\pi }A^{1/2} \leq \frac{A}{\pi}\int_{\partial \Omega_i'} (\kappa_i' - \bar \kappa_i')^2 dS = \frac{A}{\pi} \(\int_{\partial \Omega_i'} (\kappa_i')^2 - \frac{4\pi^2}{L_i'} \).\]
Adding the two inequalities over $i=1, 2$ and using
\[ \frac{1}{L_1'} + \frac{1}{L_2'} \geq \frac{2}{L_1' + L_2'},\]
we obtain after division by $2$
\[ L - 2 \sqrt{ \pi } A^{1/2} \leq \frac{ A}{\pi}\int_{\partial \Omega} \kappa^2 dS - \frac{4\pi^2}{L}= \frac{ A}{\pi}\int_{\partial \Omega}(\kappa - \bar \kappa)^2 dS,\]
the desired inequality.
\end{proof}

Next we obtain a quantitative estimate for the closeness to
an equipotential surface in terms of the isoperimetric deficit. We will present the proof below for $G(x,y)=-\frac{1}{2\pi} \log |x-y|$
and show how the proof is adapted to the case $G(x,y)=\frac{1}{|x-y|^{\alpha}}$ in the remark following the proof.

\begin{proposition}\label{equicontrol} Let $\Omega \subset \mathbb{R}^2$ be simply connected. Then there exists a constant $C=C(L,\alpha)>0$ such that
 \begin{align}
 \|\phi_{\Omega} - \bar \phi_{\Omega}\|_{L^{\infty}(\partial \Omega)}^2 \leq C (L^2 -4\pi A),
\end{align}
where $\phi_{\Omega}$ is defined by \eqref{potdefn1} for $G$ defined by \eqref{kerndef1}.
\begin{itemize}
 \item When $G(x,y) = -\frac{1}{2\pi} \log |x-y|$ the constant can be chosen to be $C=16 L^2(1+|\log L|)^2$.
\item When $G(x,y) = K(x-y)$ for $\alpha \in (0,1)$, the constant can be chosen to be $C=4\(1+\frac{1}{\pi}\)^{2-2\alpha} L^{2-2\alpha} $. 
\end{itemize}

\end{proposition}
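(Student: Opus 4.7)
The plan is to compare $\phi_\Omega$ to the potential of a concentric reference disk of equal area, exploiting that by Bonnesen's inequality \eqref{strong2} any simply connected planar domain with small isoperimetric deficit is sandwiched between nearly equal concentric disks. First I would apply Bonnesen together with the elementary observation that $B_{R_{\mathrm{in}}}(c_1)\subset\Omega\subset B_{R_{\mathrm{out}}}(c_2)$ forces $|c_1-c_2|\leq R_{\mathrm{out}}-R_{\mathrm{in}}$, yielding a single center $c\in\Omega$ and radii $r\leq R$ with $B_r(c)\subset\Omega\subset B_R(c)$ and $R-r\leq \tfrac{2}{\pi}\sqrt{L^2-4\pi A}$. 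Let $B^\ast:=B_{R_0}(c)$ with $R_0:=\sqrt{A/\pi}$, so that $|B^\ast|=A=|\Omega|$ and $r\leq R_0\leq R$ automatically.

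I then split $\phi_\Omega=\phi_{B^\ast}+(\phi_\Omega-\phi_{B^\ast})$ on $\partial\Omega$. For $\phi_{B^\ast}$ the logarithmic potential of a disk has an explicit radial formula ($-\tfrac{R_0^2}{2}\log|y-c|$ outside $B^\ast$ and quadratic-plus-constant inside), so since $|y-c|\in[r,R]$ for $y\in\partial\Omega$ a one-variable calculation gives oscillation of $\phi_{B^\ast}$ on $\partial\Omega$ bounded by $R(R-r)\lesssim L(R-r)$.

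For $\phi_\Omega-\phi_{B^\ast}$ I would write
\[\phi_\Omega(y)-\phi_{B^\ast}(y)=-\frac{1}{2\pi}\int_{\Omega\setminus B^\ast}\log|x-y|\,dx+\frac{1}{2\pi}\int_{B^\ast\setminus\Omega}\log|x-y|\,dx\]
and estimate $\int_E|\log|x-y||\,dx$ where $E:=\Omega\triangle B^\ast$ sits inside the annulus $\{r\leq|x-c|\leq R\}$ with $|E|\leq 2\pi R(R-r)$. The decisive geometric input is the strip bound
\[|E\cap B_s(y)|\leq |B_s(y)\cap\{r\leq|x-c|\leq R\}|\leq C\min\bigl(s^2,\,s(R-r)\bigr),\]
valid for every $y\in\partial\Omega$. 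Inserting this into the layer-cake identity $\int_E|\log|x-y||\,dx=\int_0^1 s^{-1}|E\cap B_s(y)|\,ds+\int_1^{\mathrm{diam}} s^{-1}(|E|-|E\cap B_s(y)|)\,ds$ and splitting the first integral at $s=R-r$, the $s^2$ bound absorbs the log singularity near $y$ (contributing only $O((R-r)^2|\log(R-r)|)$, which is dominated by $R-r$), while the strip bound $s(R-r)$ governs intermediate $s$; the second integral contributes at most $|E|\log L\lesssim L(R-r)|\log L|$. This yields $|\phi_\Omega-\phi_{B^\ast}|\lesssim L(R-r)(1+|\log L|)$ uniformly on $\partial\Omega$.

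Combining with the oscillation bound for $\phi_{B^\ast}$ and squaring gives $\|\phi_\Omega-\bar\phi_\Omega\|_{L^\infty(\partial\Omega)}^2\lesssim L^2(1+|\log L|)^2(R-r)^2\lesssim L^2(1+|\log L|)^2(L^2-4\pi A)$ via Bonnesen, matching the claimed constant $16L^2(1+|\log L|)^2$ up to a universal factor. The main obstacle is the strip bound $|E\cap B_s(y)|\leq Cs(R-r)$: the naive estimate $\pi s^2$ alone would inject a spurious $|\log(R-r)|$ factor that blows up as $\Omega$ approaches a disk, incompatible with an $L$-only constant. The Riesz case $K(x)=|x|^{-\alpha}$ with $\alpha\in(0,1)$ follows the same template but is cleaner: there is no logarithmic accumulation, and the analogous layer-cake integral $\int_0^{\mathrm{diam}} s^{-1-\alpha}|E\cap B_s(y)|\,ds$ is controlled directly by the strip bound, producing the claimed $L^{2-2\alpha}$-type constant.
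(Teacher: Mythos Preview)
Your argument is correct, but it follows a genuinely different route from the paper's.  The paper never introduces a reference disk or a layer--cake decomposition.  Instead, for two boundary points $y,z\in\partial\Omega$ with $\phi_\Omega(y)>\phi_\Omega(z)$ it adds a constant to $G$ to make the kernel nonnegative on $B_{R_{\mathrm{out}}}$, and then uses domain monotonicity directly:
\[
\phi_\Omega(y)-\phi_\Omega(z)\ \le\ \int_{B_{R_{\mathrm{out}}}}\!(G(x,y)+C)\,dx\;-\;\int_{B_{R_{\mathrm{in}}}}\!(G(x,z)+C)\,dx.
\]
Both integrals on the right are explicit radial computations (the Newtonian potential of a disk), and their difference is controlled by $R_{\mathrm{out}}-R_{\mathrm{in}}$ via a one--variable mean--value estimate on $t\mapsto t^2\log t$ (or $t\mapsto t^{2-\alpha}$ in the Riesz case).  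Bonnesen's inequality then converts $R_{\mathrm{out}}-R_{\mathrm{in}}$ into $\sqrt{L^2-4\pi A}$.  This is considerably shorter than your approach and produces the explicit constants $16L^2(1+|\log L|)^2$ and $4(1+\tfrac{1}{\pi})^{2-2\alpha}L^{2-2\alpha}$ stated in the proposition, whereas your argument (as you note) only recovers them up to a universal factor.

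What your approach buys is a certain robustness: you never need to shift the kernel to make it positive, and the strip bound $|E\cap B_s(y)|\le C\min(s^2,s(R-r))$ together with the layer--cake identity is a flexible template that would adapt to more general kernels or to settings where explicit ball potentials are unavailable.  One small correction: in your inner layer--cake integral the $s^2$ bound on $[0,R-r]$ actually gives $\int_0^{R-r}s^{-1}\cdot Cs^2\,ds=O((R-r)^2)$, not $O((R-r)^2|\log(R-r)|)$; there is no logarithm from that piece, which only strengthens your conclusion.
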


\noindent {\sc Proof of Proposition \ref{nonexist}}

\noindent Consider any two points $y, z \in \partial \Omega$  and assume first $\phi_{\Omega}(y) > \phi_{\Omega}(z)$. Then we have
\begin{align}
 \phi_{\Omega}(y) - \phi_{\Omega}(z) &= \int_{\Omega} (G(x,y) + C) - (G(x,z)+C) dx \\
&\leq \int_{B_{R_{\textrm{out}}}}G(x,y)+C  - \int_{B_{R_{\textrm{in}}}} G(x,z)+C,\label{radialeqn}
\end{align}
where $C=C(R_{\textrm{out}})>0$ is a constant chosen so that $G+C$ is positive on $B_{R_{\textrm{out}}}$. In particular $C$
can be chosen to be
\[ C = C_1 := \max (0, \frac{1}{2\pi} \log 2 R_{\textrm{out}}).\]
\noindent Using radial symmetry of the Laplacian equation \eqref{radialeqn} is in fact equal to 
\begin{equation}\label{lapsols}
 \frac{1}{2} (R_{\textrm{out}}^2 - |y|^2) - \frac{1}{2}(|z|^2 - R_{\textrm{in}}^2) + C_1 (R_{\textrm{out}}^2 - R_{\textrm{in}}^2) - \frac{R_{\textrm{out}}^2}{2} \log R_{\textrm{out}} + \frac{R_{\textrm{in}}^2}{2} \log R_{\textrm{in}}
\end{equation}
Arguing similarly when $\phi_{\Omega}(y) \leq \phi_{\Omega}(z)$ we conclude for all $y, z \in \partial \Omega$ that
\begin{align}\label{lastest1}
 |\phi_{\Omega}(y) - \phi_{\Omega}(z)| &\leq (C_1+2R_{\textrm{out}}) \(R_{\textrm{out}}-R_{\textrm{in}}\) + \frac{1}{2}\left| R_{\textrm{out}}^2 \log R_{\textrm{out}} - R_{\textrm{in}}^2 \log R_{\textrm{in}}\right|\\
&\leq \(2 + \frac{1}{\pi}\)R_{\textrm{out}}\(R_{\textrm{out}}-R_{\textrm{in}}\)  + \frac{1}{2}\left| R_{\textrm{out}}^2 \log R_{\textrm{out}} - R_{\textrm{in}}^2 \log R_{\textrm{in}}\right|.\nonumber
\end{align}
 Observe that the function $f(x) = x^2 \log x$ with $f(0)=0$ is $C^1$ and so assuming $x > y$ we have
\begin{equation}
 |f(x) - f(y)| \leq |f'|_{L^{\infty}(0, x)} |x-y| \leq 2|x|(1+|\log x|) |x-y|,
\end{equation}
where in the last inequality we've used the fact that $x \mapsto 2|x|(1+|\log x|)$ is monotone increasing on $(0,+\infty)$. Using Bonnesen's inequality (cf. equation \eqref{strong2})
we have 
\begin{align}
 R_{\textrm{out}} \leq R_{\textrm{in}} + \frac{1}{\pi} \sqrt{L^2 - 4\pi A} \leq \frac{2L}{\pi}
\end{align}
and thus
\begin{equation}
 \left| R_{\textrm{out}}^2 \log R_{\textrm{out}} - R_{\textrm{in}}^2 \log R_{\textrm{in}}\right| \leq 2L ( |\log L| + 1) (R_{\textrm{out}} - R_{\textrm{in}}) \leq 2L(|\log L| + 1) \sqrt{L^2-4\pi A}
\end{equation}

 \noindent Inserting this into \eqref{lastest1} and using \eqref{strong2}, we obtain
\begin{equation}
 |\phi_{\Omega}(y) - \phi_{\Omega}(z)| \leq 4L(1+|\log L|)\sqrt{L^2-4\pi A}.
\end{equation}
Choosing $z$ so that $|\phi_{\Omega}(y) - \phi_{\Omega}(z)| \leq |\phi_{\Omega}(y) - \bar \phi|$ and maximizing over $y \in \partial \Omega$ yields the desired inequality.
 $\Box$
 \begin{remark} The above proof easily adapts to the case of the kernel $K(x) = \frac{1}{|x|^{\alpha}}$. Indeed the constant $C_1$ above can be taken to be zero, since $K > 0$ everywhere and line \eqref{lapsols} can simply be replaced by 
\[ \(\int_{B(0,1)} \frac{dx}{|x-y|^{\alpha}}\)\( R_{\textrm{out}}^{2-\alpha} - R_{\textrm{in}}^{2-\alpha}\) \leq 2\pi R_{\textrm{out}}^{1-\alpha} (R_{\textrm{out}} - R_{\textrm{in}}) \textrm{ for } x \in \partial B(0,1),\]
where we've performed a first order Taylor expansion of the function $f(x)=x^{2-\alpha}$ about the point $x=R_{\textrm{in}}$. Using Bonnesen's inequality once again we obtain
\[|\phi_{\Omega}(y) - \phi_{\Omega}(z)| \leq 2 \(1 + \frac{1}{\pi}\)^{1-\alpha} L^{1-\alpha}\sqrt{(L^2 - 4\pi A)}.\]
The rest of the proof is argued identically to the logarithmic case. 
\end{remark}
 \noindent {\sc Proof of Theorem \ref{nonexist} item i)}
Now the proof of Theorem \ref{nonexist}, items $i)$ and $ii)$ immediately follows from combining Proposition \ref{equicontrol} and Proposition \ref{isopermineq} and using $L^2-4\pi A \leq 2L (L-2\sqrt{\pi} A^{1/2})$.

\subsection{The torus case $U=\mathbb{T}^2$}\label{Torus}
We restrict to simply connected sets in $\mathbb{T}^2=[0,1)^2$. Any simply connected set lying in $\mathbb{T}^2$ must either
be homeomorphic to the disk or to $S_{n=1}$ defined by \eqref{R.6}. In the former case, the analysis is identical to that of the $U=\mathbb{R}^2$ case above. In the latter case, we proceed as follows. By possibly changing coordinates we can see $\Omega$ as represented by a curve $X$ with two separate components, $X_{\pm}:[0,L^{\pm}) \to \mathbb{T}^2$, where $L^{\pm}$ denote the lengths of $X_{\pm}$, which are unit
speed parameterizations of the two components of $\partial \Omega$. We denote $S_L$ as a rectangle of width $L$ and height 1 in the torus. Then we define
\begin{align}
 L_{\textrm{out}} &:= \min \{L : \Omega \subset S_L\},\\
L_{\textrm{in}} &:= \max\{ L : S_L \subset \Omega\}.
\end{align}
When $X$ is $C^1$, we denote $S_{L_{\textrm{in}}}$ and $S_{L_{\textrm{out}}}$ as the two rectangles with widths $L_{\textrm{in}}$ and $L_{\textrm{out}}$ respectively such
that $X_{\pm}$ lies in between these two rectangles and touches them tangentially (see Figure 2). The main difficulty lies in controlling the support function $p$ as was done in the
proof of Theorem \ref{isopermineq} via Gage's inequality (cf. equation \eqref{gageineq}). In this case we prove an inequality for the support function when $\Omega$
is homeomorphic to $S_{n=1}$ and star shaped (cf. Lemma \ref{starlem}) and then proceed to show that all simply connected critical points must be star shaped (cf. Proposition \ref{starshaped}). Finally using
the estimate established in Lemma \ref{starlem}, we prove the analogue of Proposition \ref{isopermineq} (cf. Proposition \ref{isopermineqT}) for sets homeomorphic to $S_{n=1}$ and subsequently the analogue
of Proposition \ref{equicontrol} (cf. Proposition \ref{equicontrolT}). 

\begin{lem}\label{starlem}
Assume $X(s)$ is $C^1$ and star shaped with respect to the center of $S_{L_{\textrm{in}}}$. Let $p$ be the support function for $X(s)$ and $p^*$ the support function for $\partial S_{L_{\textrm{out}}}$ with respect to this center. Then there exists a universal constant $C>0$ such that
\[ \int_{\partial \Omega} p^2 dS \leq  \int_{\partial S_{L_{\textrm{out}}} } (p^*)^2 dS^* + C(L_{\textrm{out}}-L_{\textrm{in}}) + C(L-2),\]
where $dS$ denotes surface measure on $\partial \Omega$ and $dS^*$ the surface measure on $\partial S_{L_{\textrm{out}}}$.
\end{lem}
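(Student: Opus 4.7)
The strategy is to parametrize each component of $\partial\Omega$ as a graph $x = \ell_\pm(y)$ and then compare $\int_{\partial\Omega} p^2\, dS$ with $\int_{\partial S_{L_{\textrm{out}}}}(p^*)^2\, dS^* = L_{\textrm{out}}^2/2$ term by term. First I would shift coordinates so that the center of $S_{L_{\textrm{in}}}$ sits at the origin, identifying $\mathbb{T}^2$ with $[-1/2,1/2]^2$. The star--shapedness, together with $S_{L_{\textrm{in}}} \subset \Omega \subset S_{L_{\textrm{out}}}$ and $C^1$ regularity, writes the two components of $\partial\Omega$ as graphs $x=\ell_\pm(y)$ of $C^1$ periodic functions with $\ell_+(y)\in[L_{\textrm{in}}/2,L_{\textrm{out}}/2]$ and $\ell_-(y)\in[-L_{\textrm{out}}/2,-L_{\textrm{in}}/2]$. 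A direct computation of the outward normal and the arclength element then gives
\[
\int_{\partial\Omega_\pm} p^2\, dS \;=\; \int_{-1/2}^{1/2} \frac{(\ell_\pm(y) - y\,\ell_\pm'(y))^2}{\sqrt{1+\ell_\pm'(y)^2}}\, dy,
\]
while on the two vertical sides of $\partial S_{L_{\textrm{out}}}$ the support function equals the constant $L_{\textrm{out}}/2$, so $\int (p^*)^2\, dS^* = L_{\textrm{out}}^2/2$.

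Next I would expand the numerator as $\ell_\pm^2 - 2y\ell_\pm\ell_\pm' + y^2\ell_\pm'^2$ and bound the three resulting integrals separately. The quadratic term $\int \ell_\pm^2/\sqrt{1+\ell_\pm'^2}\, dy$ is bounded pointwise by $(L_{\textrm{out}}/2)^2$, producing exactly the reference value $L_{\textrm{out}}^2/2$ after summation over $\pm$. The pure derivative term $\int y^2 \ell_\pm'^2/\sqrt{1+\ell_\pm'^2}\, dy$ is at most $\tfrac14\int \ell_\pm'^2/\sqrt{1+\ell_\pm'^2}\, dy$, and the elementary identity $\sqrt{1+a^2}-1 \geq a^2/(2\sqrt{1+a^2})$ combined with $\int(\sqrt{1+\ell_\pm'^2}-1)\, dy = L^\pm - 1$ yields $\int \ell_\pm'^2/\sqrt{1+\ell_\pm'^2}\, dy \leq 2(L^\pm-1) \leq 2(L-2)$. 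Together these two terms contribute $L_{\textrm{out}}^2/2 + O(L-2)$.

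The main obstacle is the cross term $\int y\ell_\pm\ell_\pm'/\sqrt{1+\ell_\pm'^2}\, dy$; a direct Cauchy--Schwarz only yields $O(\sqrt{L-2})$, which is insufficient. The trick is to split
\[
\int y\ell_\pm\ell_\pm'/\sqrt{1+\ell_\pm'^2}\, dy \;=\; \int y\ell_\pm\ell_\pm'\, dy \;+\; \int y\ell_\pm\ell_\pm'\,\Bigl(\tfrac{1}{\sqrt{1+\ell_\pm'^2}}-1\Bigr) dy.
\]
For the first piece, writing $2\ell_\pm\ell_\pm'=(\ell_\pm^2)'$ and integrating by parts on $[-1/2,1/2]$ (using the periodicity $\ell_\pm(1/2)=\ell_\pm(-1/2)$) gives $\tfrac12\ell_\pm(1/2)^2 - \tfrac12\int\ell_\pm^2\, dy$, which has absolute value at most $\tfrac18(L_{\textrm{out}}^2-L_{\textrm{in}}^2) \leq C(L_{\textrm{out}}-L_{\textrm{in}})$. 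For the correction, the pointwise inequalities $|1/\sqrt{1+a^2}-1| \leq a^2/(1+a^2)$ and $|\ell_\pm'|^3/(1+\ell_\pm'^2) \leq \ell_\pm'^2/\sqrt{1+\ell_\pm'^2}$ reduce the contribution to a universal constant times $\int \ell_\pm'^2/\sqrt{1+\ell_\pm'^2}\, dy$, hence $O(L-2)$ by the previous step.

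Assembling the three term-by-term bounds and summing over both components yields the claimed inequality $\int p^2\, dS \leq L_{\textrm{out}}^2/2 + C(L_{\textrm{out}}-L_{\textrm{in}}) + C(L-2)$ with an explicit universal constant. The hard part throughout is the cross-term analysis: the integration-by-parts identity for $\int y\ell_\pm\ell_\pm'\, dy$ and the pointwise control of the $1/\sqrt{1+\ell_\pm'^2}-1$ correction are exactly what remove the spurious $\sqrt{L-2}$ factor that a straightforward estimate would produce.
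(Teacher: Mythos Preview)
Your computation is clean once you have the graph parametrization $x=\ell_\pm(y)$, but that step is not justified by the hypotheses. Star-shapedness with respect to the center $O$ of $S_{L_{\textrm{in}}}$ says only that each \emph{ray from $O$} meets $X_\pm$ once; it does not say that each horizontal line $\{y=c\}$ meets $X_\pm$ once. When the strip $[L_{\textrm{in}}/2,L_{\textrm{out}}/2]$ is not thin one can draw a $C^1$ curve from $y=-1/2$ to $y=1/2$, star-shaped from the origin, that folds back in $y$: two points with the same height but different $x$-coordinates sit on different rays from $O$, so no contradiction with star-shapedness arises. In that situation $\ell_\pm$ is not single-valued, your integration by parts $\int y(\ell_\pm^2)'\,dy$ picks up interior boundary terms, and the cross-term estimate collapses. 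A secondary issue is that the centers of $S_{L_{\textrm{in}}}$ and $S_{L_{\textrm{out}}}$ need not coincide, so $\ell_+(y)\in[L_{\textrm{in}}/2,L_{\textrm{out}}/2]$ is not automatic either, though this is easily absorbed into an $O(L_{\textrm{out}}-L_{\textrm{in}})$ error.

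The paper avoids this by using exactly the parameter that star-shapedness furnishes: the polar angle $\theta$ about $O$. It derives the infinitesimal identity $p^*\,dS^*/d\theta = p\,dS/d\theta + \tfrac12\bigl((r^*)^2-r^2\bigr)$ from Green's theorem on thin angular sectors, multiplies once by $p$ and once by $p^*$, and combines the two resulting equations so that the difference $\int p^2\,dS - \int (p^*)^2\,dS^*$ is expressed through $\int (dS^*-dS)$ and $\int\bigl((r^*)^2-r^2\bigr)\,d\theta$, both of which are immediately $O(L-2)+O(L_{\textrm{out}}-L_{\textrm{in}})$. Your term-by-term expansion and the clever integration-by-parts on the cross term would be a perfectly good alternative if you first reparametrize in $\theta$ (writing $x=r(\theta)\cos\theta$, $y=r(\theta)\sin\theta$) rather than in $y$; the algebra is heavier but the same cancellations occur.
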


\begin{proof}
Let $O$ be the center of $S_{L_{\textrm{in}}}$, $s \in [0,L^{\pm}]$ the arc length parameter for $X_{\pm}$ and $\theta(s)$ the polar angle corresponding
to the point $X(s)$ on $\partial \Omega$. Since $\Omega$ is star shaped with respect to $O$, the mapping $s \mapsto \theta(s)$ is a bijection. Therefore we may set $p=p(\theta)$ for $\theta \in (\theta_1,\theta_2)$ and $(\theta_3,\theta_4)$ which parametrize $X_+(s)$ and $X_-(s)$ respectively. Consider $[0,1)^2$ as a subset of $\mathbb{R}^2$ and let
$X^*(s)$ be the projection of $X(s)$ onto $\partial S_{L_{\textrm{out}}}$. Observe that this projection will parametrize a subset of
$\partial S_{L_{\textrm{out}}}$ which includes the two vertical sides (See Figure 2) along with a segment of the horizontal pieces of $\partial S_{L_{\textrm{out}}}$
lying between $S_{L_{\textrm{in}}}$ and $S_{L_{\textrm{out}}}$ on $\partial [0,1)^2$, and our calculations below will include this segment. To avoid confusion we will denote $\tilde S_{L_{\textrm{out}}}$ and $\tilde S_{L_{\textrm{in}}}$ as
the entire rectangles, which includes the top segments along $\{(x,y) : 0 \leq x \leq 1, y = 0 \textrm{ or } y=1\} \subset \partial [0,1)^2$. 

Consider a small ray of angular width $\alpha \ll 1$ extending from $O$ and such that $\partial \Omega \neq \partial \Omega^*$ in this ray.
Define $R_{\alpha}$ to be the segment of the ray in between the curves $\partial S_{L_{\textrm{out}}}$ and $\partial \Omega$. Then from
Green's theorem we have
\begin{equation}
\int_{R_{\alpha}} X \cdot \nu dS_{R_{\alpha}} = |S_{L_{\textrm{out}}} \cap R_{\alpha}| - |\Omega \cap R_{\alpha}| := A_{\alpha}.
\end{equation}
Breaking up the integral on the right, we have
\begin{equation}
\int_{\partial \tilde S_{L_{\textrm{out}}} \cap R_{\alpha}} p^*(\theta) \frac{dS^*}{d\theta}d\theta = \int_{\partial \Omega \cap R_{\alpha}} p(\theta) \frac{dS}{d\theta} d\theta + A_{\alpha}.
\end{equation}
Now divide by $\alpha$ and send $\alpha \to 0$. Then we obtain
\begin{equation}
p^*(\theta) \frac{dS^*}{d\theta} = p(\theta) \frac{dS}{d\theta}  + \frac{1}{2}[[r^*(\theta)]^2 - r(\theta)^2],\end{equation}
where $r^*$ and $r$ denote the distances to $\partial \tilde S_{L_{\textrm{out}}}$ and $\partial \Omega$ respectively. 
Then multiplying by $p$ and $p^*$ we have
\begin{align}\label{L1.1}
p^*(\theta)p(\theta) \frac{dS^*}{d\theta} = p(\theta)^2 \frac{dS}{d\theta} + \frac{p(\theta)}{2} [[r^*(\theta)]^2-r(\theta)^2]
\end{align}
and 
\begin{align}\label{L1.2}
p(\theta)p^*(\theta) \frac{dS}{d\theta} = p^*(\theta)^2\frac{dS^*}{d\theta} + \frac{p^*(\theta)}{2}[[r^*(\theta)]^2 - r(\theta)^2],
\end{align}
respectively.
Integrating, we have
\begin{align}
\int_{\theta_1}^{\theta_2} p^2(\theta) \frac{dS}{d\theta} d\theta &\stackrel{\eqref{L1.1}}{=} \int_{\theta_1}^{\theta_2}p^*(\theta)p(\theta) \frac{dS^*}{d\theta}  d\theta - \int_{\theta_1}^{\theta_2}\frac{p(\theta)}{2}[[r^*(\theta)]^2-r(\theta)^2]d\theta \\
&\stackrel{\eqref{L1.2}}{=} \int_{\theta_1}^{\theta_2} p^*(\theta)^2\frac{dS^*}{d\theta} d\theta + \int_{\theta_1}^{\theta_2} p^*(\theta)p(\theta) \( \frac{dS^*}{d\theta} - \frac{dS}{d\theta}\)\\
&- \frac{1}{2}\int_{\theta_1}^{\theta_2}[p^*(\theta)-p(\theta)][[r^*(\theta)]^2 - r(\theta)^2] d\theta \nonumber \\
&\leq \label{L1.4} \int_{\theta_1}^{\theta_2} p^*(\theta)^2\frac{dS^*}{d\theta} d\theta + 2\max_{\theta \in (\theta_1,\theta_2))} p^*(\theta)p(\theta) \((L^+-1) + 2(L_{\textrm{out}}-L_{\textrm{in}}) \)\\ &+ \frac{1}{2}\max_{\theta \in (\theta_1,\theta_2)} [p^*(\theta)-p(\theta)](|S_{L_{\textrm{out}}}|-A),\nonumber
\end{align}
where $L^+$ is the length of $X_+$ and \eqref{L1.4} has counted the contribution of $\int_{\theta_1}^{\theta_2} p^*(\theta)p(\theta) \( \frac{dS^*}{d\theta} - \frac{dS}{d\theta}\)$ along
the vertical pieces of $ \partial \tilde S_{L_{\textrm{out}}}$ and the horizontal segment $\partial \tilde S_{L_{\textrm{out}}} \backslash \partial \tilde S_{L_{\textrm{in}}}$. 
Next we observe that
\begin{align}
  \int_{\theta_3}^{\theta_4} p^*(\theta)^2\frac{dS^*}{d\theta} d\theta + \int_{\theta_1}^{\theta_2} p^*(\theta)^2\frac{dS^*}{d\theta} d\theta &= \int_{\tilde \partial S_{L_{\textrm{out}}} \backslash \tilde \partial S_{L_{\textrm{in}}}} (p^*)^2dS^* + \int_{\partial S_{L_{\textrm{out}}}} (p^*)^2dS^*\\ &\leq 8(L_{\textrm{out}}-L_{\textrm{in}}) + \int_{\partial S_{L_{\textrm{out}}}} (p^*)^2 dS^*.
\end{align}

Repeating the above on the curve $X_-$, using the uniform boundendess of $p$ and $p^*$  and $|S_{L_{\textrm{out}}}- A| \leq |L_{\textrm{out}}-L_{\textrm{in}}|$ yields the result.
\end{proof}
\begin{proposition}\label{starshaped}
For $\bar \eta $ sufficiently small, any critical point on $\mathbb{T}^2$ is star shaped with with respect to the center of $S_{L_{\textrm{in}}}$.
\end{proposition}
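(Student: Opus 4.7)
The strategy is to adapt the argument of Proposition \ref{cpareconvex}: first exploit the Euler--Lagrange equation together with the topology of $\partial\Omega$ on $\mathbb{T}^2$ to show that the curvature of each boundary component is small when $\bar\gamma$ is small, then translate this pointwise curvature bound into near-verticality of each component, and finally verify the star-shaped condition from the center of $S_{L_{\text{in}}}$ by a direct computation of the polar-angle derivative.

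First I would establish the weak Euler--Lagrange equation on each of the two components $X_+, X_-$ of $\partial\Omega$. The computation is word-for-word the same as Proposition \ref{weakcomp}, yielding $C^{3,\alpha}$ regularity and $\kappa(y) + \gamma\phi_\Omega(y) = \lambda_\pm$ on $X_\pm$. Because $\Omega$ is homeomorphic to $S_{n=1}$, each $X_\pm$ is a simple closed curve on $\mathbb{T}^2$ homotopic to a non-contractible loop; lifting to the universal cover, the tangent is periodic along $X_\pm$ and hence the rotation index vanishes, giving $\int_{X_\pm}\kappa\,dS = 0$. Averaging then fixes $\lambda_\pm = \gamma\bar\phi_\Omega^\pm$ and
\[
\kappa(y) = \gamma(\bar\phi_\Omega^\pm - \phi_\Omega(y)), \qquad \|\kappa\|_{L^\infty(\partial\Omega)} \leq 2\gamma\|\phi_\Omega\|_{L^\infty(\mathbb{T}^2)}.
\]
The torus version of the potential bound \eqref{phibound} reads $\|\phi_\Omega\|_{L^\infty} \leq Cm(1+|\log L|)$, and combined with $L \geq 2$ and the definition \eqref{rescaled2} this yields $L\|\kappa\|_{L^\infty(\partial\Omega)} \leq C\bar\gamma$.

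Next I would convert this curvature bound into a geometric statement. Parametrize each $X_\pm$ by arclength in the universal cover so that $X_\pm(s+L^\pm) = X_\pm(s) + (0,1)$, and let $\theta_\pm(s)$ denote the tangent angle. Integration of $\theta_\pm' = \kappa$ shows $|\theta_\pm(s) - \theta_\pm(0)| \leq C\bar\gamma$ for all $s$; the closing identity $\int_0^{L^\pm}\cos\theta_\pm\,ds = 0$ combined with this oscillation bound forces $|\cos\theta_\pm(s)| \leq C\bar\gamma$ pointwise. Hence $|x_\pm'(s)| \leq C\bar\gamma$, $y_\pm'(s) \geq \tfrac{1}{2}$, and $L^\pm \leq 1 + O(\bar\gamma^2)$, so the horizontal oscillation of each $X_\pm$ is at most $C\bar\gamma$ and in particular $L_{\text{out}} - L_{\text{in}} \leq C\bar\gamma$. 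Combined with $L_{\text{in}} \leq m \leq L_{\text{out}}$, this yields $L_{\text{in}} \geq m - C\bar\gamma \geq m/2$ for $\bar\gamma$ sufficiently small.

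Finally, to verify star-shapedness with respect to $O = $ center of $S_{L_{\text{in}}}$, it suffices to show the polar angle $\arg(X_\pm(s) - O)$ is monotone in $s$ over one period in the universal cover. The derivative equals
\[
\frac{d}{ds}\arg(X_\pm(s) - O) = \frac{(x_\pm(s) - x_O)y_\pm'(s) - (y_\pm(s) - y_O)x_\pm'(s)}{|X_\pm(s) - O|^2}.
\]
For $X_+$, choosing the starting point $s=0$ at the same height as $O$, one has $x_+(s) - x_O \geq L_{\text{in}}/2 \geq m/4$, $y_+'(s) \geq 1/2$, $|y_+(s) - y_O| \leq 1$, and $|x_+'(s)| \leq C\bar\gamma$, so the numerator is bounded below by $m/8 - C\bar\gamma$, which is positive for $\bar\gamma$ sufficiently small. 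The same argument works for $X_-$, establishing star-shapedness. The main obstacle I anticipate is the Gauss--Bonnet-type vanishing of $\int_{X_\pm}\kappa\,dS$ for non-contractible loops on the torus and careful bookkeeping in the universal cover when defining the polar-angle parametrization; both are resolved by lifting and using the near-verticality of the boundary curves.
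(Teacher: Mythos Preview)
Your argument is correct but follows a genuinely different route from the paper's. Both proofs begin the same way: the Euler--Lagrange equation together with the vanishing rotation index of each non-contractible component gives $\bar\kappa=0$ on $X_\pm$, and hence the pointwise bound $L\|\kappa\|_{L^\infty(\partial\Omega)}\le C\bar\gamma$. The difference is in how this curvature bound is converted into star-shapedness.

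The paper disposes of the matter by a short contradiction: if $\Omega$ were not star-shaped, some ray from $O$ would meet $X_+$ at two points at which the normals are opposite; the mean value theorem then forces $\kappa\ge\pi/(s_2-s_1)\ge\pi/L$ at an intermediate point, contradicting $L|\kappa|\le C\bar\gamma$ once $\bar\gamma$ is small. You instead argue directly and quantitatively: integrating the curvature bound controls the oscillation of the tangent angle, the closing condition $\int_0^{L^\pm}\cos\theta_\pm\,ds=0$ pins the tangent near vertical, and an explicit computation of the polar-angle derivative then gives monotonicity. Your route is longer but more transparent, and it produces useful side information (near-verticality, $L_{\text{out}}-L_{\text{in}}\le C\bar\gamma$, $L^\pm=1+O(\bar\gamma^2)$) that the paper only extracts later in Proposition~\ref{isopermineqT}. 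The cost is that your smallness threshold on $\bar\gamma$ depends on $m$ through the inequality $m/8>C\bar\gamma$ in the final step; this is harmless here since the torus analysis throughout allows constants to depend on the fixed mass $m\in(0,1)$.
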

\begin{proof}
Arguing as in Section \ref{weakeleqn} it is easily seen that the
curves $X_{\pm}$ are $C^{3,\alpha}$. Assume the result of the Proposition is false for $X_{\pm}(s)$ and let $O$ be the center of $S_{L_{\textrm{in}}}$.  Then there exists an angle $\theta \in (\theta_1,\theta_2)$ (without loss of generality
we choose the curve on the left, $X_+(s)$ in Figure 2) such that
$O +(\cos \theta,\sin \theta)$ intersects two points on $X_+(s)$, which we denote as $X_+(s_1)$ and $X_+(s_2)$. Then necessarily
$\nu(X_+(s_1)) = - \nu (X_+(s_2))$ for two pairs of such points. Thus, by the mean value theorem it follows that there exists an $s \in [s_1,s_2]$ such that
\[ \kappa(X_+(s)) = \frac{\pi}{s_2-s_1} \geq \frac{\pi}{L}.\]
Then using \eqref{realel} we ahve
\[ \kappa(X_+(s)) \geq \frac{1}{L} (\pi - L(1+|\log L|) \gamma C ),\]
where $C>0$ is universal. We thus have
\[ \kappa(X_+(s)) \geq \frac{1}{L} (\pi -  \bar \gamma),\]
 since $L \geq 1$. Choosing $\bar \gamma$ sufficiently small yields a contradiction of \eqref{realel} since $|\kappa|L \leq C \gamma L^2 (1+|\log L|) \leq C \bar \gamma$. 
\end{proof}

\noindent We then have the following version of Proposition \ref{isopermineq}. We present the proof in the case $A=\frac{1}{2}$ for
simplicity of presentation but the proof is easily adapted for any $A \in (0,1)$. 
\begin{figure}
 \centering
 \includegraphics[scale=0.3]{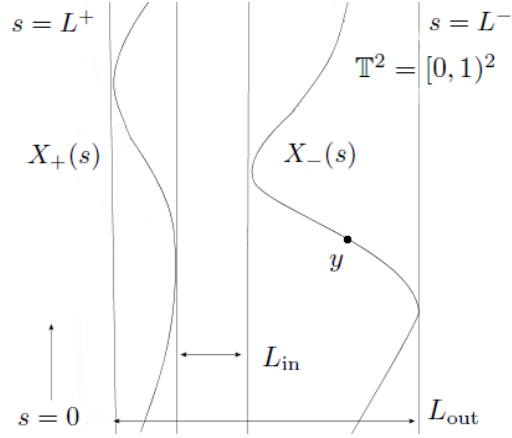}
 \caption{The curve $X(s)$ is composed of two components, $X_{\pm}(s)$ contained in between two rectangles $S_{L_{\textrm{out}}}$ and
 $S_{L_{\textrm{in}}}$ of lengths $L_{\textrm{out}}$ and $L_{\textrm{in}}$ respectively. }

 \end{figure}
\begin{proposition}\label{isopermineqT} Let $\Omega \subset \mathbb{T}^2$ have a $C^2$ boundary and be homeomorphic to $S_{n=1}$. Assume in addition that $\Omega$ is star shaped with respect to the center of $S_{L_{\textrm{in}}}$.  Then there is a universal constant $C>0$ such that
 \begin{equation}
  |L_{\textrm{out}}-L_{\textrm{in}}|^2 \leq C\int_{\partial \Omega}( \kappa - \bar \kappa)^2.
 \end{equation}
\end{proposition}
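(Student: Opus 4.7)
The plan is to parametrize each component of $\partial\Omega$ as a Cartesian graph over the periodic stripe direction and then run a Wirtinger-type inequality on the resulting periodic function, bypassing the route through $L-2$ (which would lose a power).

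First, I would exploit the star-shaped hypothesis with respect to the center $O$ of $S_{L_{\mathrm{in}}}$ to write each of the two components $X_{\pm}$ of $\partial\Omega$, in a suitable fundamental domain of $\mathbb{T}^2$, as a Cartesian graph $x = f_{\pm}(y)$ for $y \in S^1 = \mathbb{R}/\mathbb{Z}$ with $f_{\pm}$ periodic and of class $C^2$. Star-shapedness with respect to $O$, which lies at distance at least $L_{\mathrm{in}}/2$ from both components, prevents the tangent to $X_{\pm}$ from becoming horizontal and yields a uniform slope bound $|f_{\pm}'|\le M$ (where $M$ depends only on $L_{\mathrm{in}}$). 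Next, I would verify that $\bar\kappa=0$: each component $X_{\pm}$ is a non-contractible loop on $\mathbb{T}^2$ whose lift to the universal cover has matching initial and terminal tangent directions, so $\int_0^{L^{\pm}}\kappa_{\pm}\,ds=0$, and summing gives $\int_{\partial\Omega}\kappa\,dS=0$.

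The heart of the argument is then a chain of one-dimensional estimates. Using the graph formulas $\kappa_{\pm}=\pm f_{\pm}''/(1+(f_{\pm}')^2)^{3/2}$ and $ds=\sqrt{1+(f_{\pm}')^2}\,dy$ together with $|f_{\pm}'|\le M$, one obtains
\[
\int \kappa_{\pm}^{2}\,ds \;=\; \int_{0}^{1}\frac{(f_{\pm}'')^{2}}{(1+(f_{\pm}')^{2})^{5/2}}\,dy \;\ge\; c_{M}\int_{0}^{1}(f_{\pm}'')^{2}\,dy.
\]
Since $f_{\pm}$ is periodic, $\int_{0}^{1}f_{\pm}'\,dy=0$, so Wirtinger's inequality on $S^{1}$ gives $\int(f_{\pm}')^{2}\,dy\le(2\pi)^{-2}\int(f_{\pm}'')^{2}\,dy$. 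Combining with $\mathrm{osc}(f_{\pm})\le\int_{0}^{1}|f_{\pm}'|\,dy\le\bigl(\int(f_{\pm}')^{2}\,dy\bigr)^{1/2}$ by Cauchy--Schwarz yields $\mathrm{osc}(f_{\pm})^{2}\le C_{M}\int\kappa_{\pm}^{2}\,ds$. Finally, since the widest and narrowest horizontal sections of the slab enclosed by the two components can differ by at most $\mathrm{osc}(f_+)+\mathrm{osc}(f_-)$, we have $L_{\mathrm{out}}-L_{\mathrm{in}}\le\mathrm{osc}(f_+)+\mathrm{osc}(f_-)$, and therefore
\[
(L_{\mathrm{out}}-L_{\mathrm{in}})^{2}\;\le\;2\bigl(\mathrm{osc}(f_+)^{2}+\mathrm{osc}(f_-)^{2}\bigr)\;\le\;C_{M}\int_{\partial\Omega}\kappa^{2}\,dS\;=\;C_{M}\int_{\partial\Omega}(\kappa-\bar\kappa)^{2}\,dS.
\]

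The main obstacle will be rigorously setting up the Cartesian graph parametrization together with the uniform slope bound from star-shapedness with respect to $O$: the natural parametrization coming from star-shapedness is in polar form around $O$, and converting it to a graph over the $y$-axis while retaining control of the slope requires some geometric work. A secondary point is making the constant genuinely universal; one likely route is to observe that when the right-hand side is small, $L_{\mathrm{in}}$ must be bounded below by a universal constant (otherwise the topology of $S_{n=1}$ forces a large curvature contribution), so $M$ can itself be taken universal, and outside this regime the inequality $(L_{\mathrm{out}}-L_{\mathrm{in}})^{2}\le C$ is trivial.
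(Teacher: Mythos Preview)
Your argument has a genuine gap at the step you yourself flag as ``the main obstacle'': star-shapedness of $\Omega$ with respect to the center $O$ of $S_{L_{\mathrm{in}}}$ does \emph{not} imply that the boundary components $X_{\pm}$ are Cartesian graphs $x=f_{\pm}(y)$, nor does it give any slope bound. Star-shapedness only yields a \emph{polar} graph $r=\rho_{\pm}(\theta)$ about $O$. To see that horizontal tangents are allowed, note that at a point $P\in X_{+}$ with horizontal tangent the outward normal is vertical, so the condition $\langle P-O,\nu\rangle\neq 0$ reduces to $P_y\neq O_y$; thus $X_{+}$ may have a horizontal tangent at any height other than $O_y$ and still be star-shaped with respect to $O$. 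Concretely, take $X_{+}$ to be a nearly vertical curve with a small bump to the right at some height $y_0\neq O_y$: every ray from $O$ still meets it exactly once, yet near the top of the bump $y$ is not monotone in arclength and the curve is not a graph over $y$. In this situation your constant $c_M$ degenerates to $0$ and the Wirtinger chain collapses. Reworking the argument in polar coordinates about $O$ is not straightforward either: the angular parameter $\theta$ ranges over a proper sub-interval $(\theta_1,\theta_2)$ rather than a full circle, there is no periodicity in $\theta$, and Wirtinger does not apply without boundary terms.

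Your secondary fix for universality also fails: when $L_{\mathrm{in}}$ is small one need not have large curvature, since both components can be nearly straight and close together (so $L_{\mathrm{in}}\approx L_{\mathrm{out}}$ are both small while $\int\kappa^2$ is small). The paper avoids all of this by working with the support function $p=\langle X,-\nu\rangle$: from $\int p\kappa\,dS=L^{\pm}-1$ and $\bar\kappa=0$ one gets $L-2=\int(p-\bar p)(\kappa-\bar\kappa)\,dS$, and then a comparison of $\int p^2\,dS$ with $\int_{\partial S_{L_{\mathrm{out}}}}(p^*)^2\,dS^*$ (Lemma~\ref{starlem}, which is where star-shapedness is actually used) bounds $\int(p-\bar p)^2\,dS$ by $C(L-2)+C(L_{\mathrm{out}}-L_{\mathrm{in}})$. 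This yields $L-2\le C\int(\kappa-\bar\kappa)^2$ and hence the claim via $|L_{\mathrm{out}}-L_{\mathrm{in}}|\le L-2$. The constants here are universal because all support-function quantities are bounded by the diameter of the torus.
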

\begin{proof}
%Assume first that $X_{\pm}$ are $C^2$ curves. 
We perform all the analysis on one of the curves $X_+$ with length $L^+$, then argue symmetrically. We point out that below
averaged quantities such as $\bar p$ and $\bar \kappa$ will always denote the average over the entire curve $X$. As in Proposition \ref{isopermineq} we have
\begin{align}
 \int_0^{L^+} p \kappa dS &= -\int_0^{L^+} \langle X, \kappa \nu \rangle dS = -\int_0^{L^+} \langle X, X'' \rangle dS \nonumber \\
&= - \langle X_+, X_+' \rangle \big|_0^{L^+} + \int_0^{L^+} \langle X_+', X_+' \rangle dS \nonumber \\
&= L^+ - X_+'(L^+) \cdot X_+(L^+) + X_+'(0) \cdot X_+(0),\label{GageCalcT},
\end{align}
where in contrast to Section \ref{Rn}, the boundary terms do not vanish. Withous loss of generality we choose coordinates so that $X_+'(0)=X_+'(L)=(1,0)$ (this is possible by periodicity and Rolle's theorem for $C^1$ functions) and such that
$X_+(L^+)-X_-(L^-)=(1,0)$. Then the above becomes
\begin{align}
 \int p \kappa dS = L^+ - 1.
\end{align}

\noindent Adding and subtracting $\bar \kappa$ to $\kappa$ in the integrand on the left side of \eqref{GageCalcT} as before, we have
\begin{equation}\label{75}
 L^+ - 1 - \bar \kappa \int_0^{L^+} p dS = \int_0^{L^+} p (\kappa - \bar \kappa) dS.
\end{equation}
On each curve $X_{\pm}$ we have
\[ \int_0^{L^{\pm}} \kappa = \alpha(L^{\pm}) - \alpha(0),\]
where $\alpha$ denotes the angle of the normal vector with respect to some reference axis. By periodicity $\alpha(0)=\alpha(L^{\pm})$ and so $\bar \kappa = 0$. We are left with
\begin{equation}\label{76}
 L^+-1 = \int_0^{L^+} (p-\bar p) (\kappa - \bar \kappa).
\end{equation}

We can compute via the Gauss-Green theorem that  $\bar p \geq [2A - L_{\textrm{out}}]/L = [1 - L_{\textrm{out}}]/L$ since $A=1/2$, where the term $-L_{\textrm{out}}/L$ comes from subtracting
the contribution from the boundary of the torus. Applying Cauchy-Schwarz to \eqref{75} we are left with 
\begin{equation}
 \int_{\partial \Omega} p^2 - (\bar p)^2L  \leq \int_{\partial \Omega} p^2 - \frac{(1-L_{\textrm{out}})^2}{L}.
\end{equation} 

\noindent Using Lemma \ref{starlem}  the above is controlled by
\begin{align}\label{T2.2}
 C(L_{\textrm{out}}-L_{\textrm{in}}) + C(L-2) + \int_{\partial S_{L_{\textrm{out}}}} (p^*)^2 dS^* -  \frac{(1-L_{\textrm{out}})^2}{L}.
\end{align} Observe that the difference between the centers of $S_{L_{\textrm{in}}}$ and $S_{L_{\textrm{out}}}$ is controlled
by $|L_{\textrm{out}} - L_{\textrm{in}}|$. Since the function $p^*$ is defined with respect to the center of $S_{L_{\textrm{in}}}$,
it is easily seen via direct computation, setting $X = X + c - c$ so that $O+c$ is the center of $S_{L_{\textrm{out}}}$ that
\[ \int_{\partial S_{L_{\textrm{out}}}} (p^*)^2 dS^* \leq \frac{L_{\textrm{out}}^2}{2} + C(L_{\textrm{out}}-L_{\textrm{in}}),\]
where $C>0$ is universal. Inserting this into \eqref{T2.2} we have
\begin{align} 
\int_{\partial S_{L_{\textrm{out}}}} &(p^*)^2 dS^* -  \frac{(1-L_{\textrm{out}})^2}{L} \leq \frac{2L_{\textrm{out}}^2L-1}{4L}\\ &+ \frac{1}{L} \(\frac{1}{2}-L_{\textrm{out}}\) \leq 2\frac{L-2 + L_{\textrm{out}}-L_{\textrm{in}}}{L} + C(L_{\textrm{out}}-L_{\textrm{in}}).\nonumber
\end{align}
Thus it follows that
\begin{align}
 \int_{\partial \Omega}(p - \bar p)^2 dS \leq C(L-2) + C(L_{\textrm{out}}-L_{\textrm{in}}),
\end{align} 
which when inserted into \eqref{76} and after applying Cauchy-Schwarz yields

\begin{align}
 L^+-1 \leq C \sqrt{ (L-2) + (L_{\textrm{out}}-L_{\textrm{in}})} \sqrt{\int_{\partial \Omega} (\kappa - \bar \kappa)^2 dS}.
\end{align}
Moreover $L-2 = (L^+-1)+(L^--1) \geq |L_{\textrm{out}}-L_{\textrm{in}}|$ (see Figure 2). Repeating the above analysis on the curve $X_-$ and
adding the results, there exists a universal constant $C>0$ such that 
\begin{align}
 L-2 \leq C \sqrt{L-2}\sqrt{\int_{\partial \Omega} (\kappa - \bar \kappa)^2 dS}.
\end{align}
Dividing by $\sqrt{L-2}$ and using $L-2 \geq |L_{\textrm{out}}-L_{\textrm{in}}|$ again we have
\begin{equation}
 |L_{\textrm{out}}-L_{\textrm{in}}|^{1/2} \leq C\sqrt{\int_{\partial \Omega} (\kappa - \bar \kappa)^2 dS},
\end{equation}
which yields the result upon squaring both sides.

\end{proof}

\noindent Now we have the proposition analogous to Proposition \ref{equicontrol}.

\begin{proposition}\label{equicontrolT} Let $\Omega \subset \mathbb{T}^2$ be homeomorphic to $S_{n=1}$. Then there exists a universal constant
$C>0$ such that
 \begin{align}
 \|\phi_{\Omega} - \bar \phi_{\Omega}\|_{L^{\infty}(\partial \Omega)} \leq C |L_{\textrm{out}}-L_{\textrm{in}}|
\end{align}
\end{proposition}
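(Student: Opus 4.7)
The strategy mirrors that of Proposition \ref{equicontrol}, with the role of the two concentric balls played by the concentric stripes $S_{L_{\textrm{in}}} \subset \Omega \subset S_{L_{\textrm{out}}}$. Set $\epsilon := L_{\textrm{out}} - L_{\textrm{in}}$, and choose coordinates so that $S_{L_{\textrm{in}}}$ and $S_{L_{\textrm{out}}}$ share the same axis of symmetry. Let $S$ be the stripe of width $L_{\textrm{in}}$ so positioned; then $\partial S$ lies within Euclidean distance $\epsilon$ of $\partial \Omega$. Since $\|\phi_\Omega - \bar\phi_\Omega\|_{L^\infty(\partial \Omega)} \leq \sup_{y,z \in \partial\Omega}|\phi_\Omega(y) - \phi_\Omega(z)|$, the plan is to write, for arbitrary $y,z \in \partial \Omega$,
\[
\phi_\Omega(y) - \phi_\Omega(z) = \bigl[\phi_S(y) - \phi_S(z)\bigr] + \bigl[(\phi_\Omega - \phi_S)(y) - (\phi_\Omega - \phi_S)(z)\bigr],
\]
and to estimate each bracket separately by a universal constant times $\epsilon$.

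For the first bracket, by translation invariance of $S$ in the tangential direction $x_2$, one has $\phi_S(x_1,x_2) = \psi(x_1)$ where $\psi$ solves the 1D ODE $-\psi'' = \chi_{[0,L_{\textrm{in}}]} - L_{\textrm{in}}$ on $\mathbb{T}^1$. Elementary integration shows $\|\psi'\|_{L^\infty(\mathbb{T}^1)} = O(1)$, and the reflection symmetry of $S$ about its central axis yields that the two transverse boundary values of $\psi$ coincide. Since both $y_1$ and $z_1$ lie within distance $\epsilon$ of one of these two endpoint values, the Lipschitz bound combined with endpoint equality gives $|\phi_S(y) - \phi_S(z)| \leq 2\|\psi'\|_{L^\infty}\,\epsilon$.

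For the second bracket, using the standard normalization $\int_{\mathbb{T}^2} G(x,y)\,dx \equiv 0$, one has
\[
(\phi_\Omega - \phi_S)(y) = \int_{\mathbb{T}^2} G(x,y)\bigl(\chi_\Omega - \chi_S\bigr)(x)\,dx,
\]
where $|\chi_\Omega - \chi_S| \leq 1$ is supported in $S_{L_{\textrm{out}}} \setminus S_{L_{\textrm{in}}}$, a union of at most two slabs of total transverse width $\epsilon$ wrapping entirely around $\mathbb{T}^2$. I would bound this with the pointwise estimate $|G(x,y)| \leq C(1 + |\log \mathrm{dist}_{\mathbb{T}^2}(x,y)|)$ for the periodic Green's function, and then apply Fubini in the tangential direction: the inner integral $\int_{-1/2}^{1/2}|\log\sqrt{t^2+s^2}|\,ds$ is uniformly bounded in $t$, so the full integral is bounded by $C\epsilon$.

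The main obstacle is ensuring that this last step gives a linear bound in $\epsilon$ rather than the weaker $\epsilon|\log\epsilon|$ that a naive estimate for the potential of a set of arbitrary shape and measure $\epsilon$ would yield. The resolution is geometric: the slab $S_{L_{\textrm{out}}} \setminus S_{L_{\textrm{in}}}$ wraps entirely around $\mathbb{T}^2$ in the tangential direction, so the logarithmic singularity is absorbed by integration first in the unit-length tangential direction, leaving only a thin one-dimensional integration in the transverse direction of length $\epsilon$. This wrap-around feature, peculiar to stripe-like sets on $\mathbb{T}^2$, is precisely what gives the sharper linear dependence claimed in the proposition and distinguishes the torus estimate from its Euclidean counterpart.
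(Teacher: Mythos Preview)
Your argument is correct, and it takes a genuinely different route from the paper's. The paper mimics the Euclidean Proposition~\ref{equicontrol} via a monotonicity trick: it shifts the torus Green's function by a constant so that $G+C>0$, then uses the sandwich $S_{L_{\textrm{in}}}\subset\Omega\subset S_{L_{\textrm{out}}}$ to bound $\phi_\Omega(y)-\phi_\Omega(z)$ above by $\int_{S_{L_{\textrm{out}}}}(G+C)(x,y)\,dx-\int_{S_{L_{\textrm{in}}}}(G+C)(x,z)\,dx$, and finally computes these stripe potentials explicitly as one-dimensional Poisson solutions. Your decomposition $\phi_\Omega=\phi_S+(\phi_\Omega-\phi_S)$ avoids the positivity shift altogether: the first piece is handled by the Lipschitz bound on the explicit one-dimensional potential together with the equality of its two boundary values, and the second piece is handled by your Fubini observation that the thin slab $S_{L_{\textrm{out}}}\setminus S_{L_{\textrm{in}}}$ wraps around the torus, so integrating $|\log|$ first in the tangential direction yields a uniformly bounded integrand in the transverse variable. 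This last step is cleaner than the paper's treatment and makes transparent exactly why the estimate is linear in $\epsilon$ rather than $\epsilon|\log\epsilon|$. One minor quibble: you cannot in general choose coordinates so that $S_{L_{\textrm{in}}}$ and $S_{L_{\textrm{out}}}$ share the same axis of symmetry, but this is irrelevant to your argument, which only uses that $\partial\Omega\subset S_{L_{\textrm{out}}}\setminus\operatorname{int}S_{L_{\textrm{in}}}$ and hence every $y_1$ lies within $\epsilon$ of $\partial S_{L_{\textrm{in}}}$.
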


\noindent {\sc Proof of Proposition \ref{nonexist}}

\noindent Consider any two points $y, z \in \partial \Omega$ and assume first $\phi_{\Omega}(y) > \phi_{\Omega}(z)$. Then we have
\begin{align}
 \phi_{\Omega}(y) - \phi_{\Omega}(z) &= \int_{\Omega} G(x,y)+C - G(x,z)-C dx \\
&\leq \int_{S_{L_{\textrm{out}}}}G(x,y)+C  - \int_{S_{L_{\textrm{in}}}} G(x,z)+C,\label{radialeqn2}
\end{align}
where $C$ is chosen so that $G + C > 0$ on $\mathbb{T}^2$. This is possible since $G$ takes the form
\[ G(x,y) = -\frac{1}{2\pi} \log |x-y| + S(x-y),\]
where $S$ is smooth \cite{gilbarg}. 
 The integrals of the logarithmic terms appearing in \eqref{radialeqn2} represent, up to translations, solutions of the one dimensional Poisson equation
\begin{equation}\label{R.633}
-u_{xx} =\left\{\begin{array}{ccc}
1&\mbox{for}&0\le x\le w\\
0&\mbox{for}&w\le x\le 1\\
\end{array}\right.
\end{equation}
for $w = L_{\textrm{in}}$ and $w=L_{\textrm{out}}$. One can explicitly solve \eqref{R.633},
\begin{equation}\label{R.6334}
u(x) =\left\{\begin{array}{ccc}
-\frac{1}{2} x^2 + (w- \frac{1}{2}w^2)x &\mbox{for}&0\le x\le w\\
-\frac{1}{2} w^2 (x-1)&\mbox{for}&w\le x\le 1\\
\end{array}\right. + C_0
\end{equation}
where $C_0$ is a constant. It is then easily seen via direct computation
that there is a universal constant $C>0$ such that the integrals of the logarithmic terms in \eqref{radialeqn2} are controlled by 
\begin{equation}\label{lapsol}
 C(L_{\textrm{out}} - L_{\textrm{in}}).
\end{equation}
 Since the remaining terms in \eqref{radialeqn2} are bounded in $L^{\infty}(\mathbb{T}^2)$ we conclude that 
\begin{align}
\phi_{\Omega}(y)-\phi_{\Omega}(z) \leq  C(L_{\textrm{out}}-L_{\textrm{in}}),
\end{align}
where $C>0$ is universal. Arguing identically when $\phi_{\Omega}(y) \leq \phi_{\Omega}(z)$ we conclude
\[ |\phi_{\Omega}(y) - \phi_{\Omega}(z)| \leq C(L_{\textrm{out}}-L_{\textrm{in}}).\]
Choosing $z$ so that $|\phi_{\Omega}(y)- \bar \phi_{\Omega}| \leq |\phi_{\Omega}(y)- \phi_{\Omega}(z)|$, and
optimizing over $y \in \partial \Omega$ yields the result. $\Box$\\
\medskip

 \noindent {\sc Proof of Theorem \ref{nonexist} iii)}
Now the proof of Theorem \ref{nonexist} follows immediately from combining Proposition \ref{isopermineqT} and Proposition \ref{equicontrolT} as
was done in the case of $U=\mathbb{R}^2$. When $\Omega$ is homeomorphic to the disk, one can repeat the analysis for $U=\mathbb{R}^2$. In particular
line \eqref{lastest1} in the proof of Theorem \ref{nonexist} items i) and ii) will be replaced with
\[ |\phi_{\Omega}(y) - \phi_{\Omega}(z)| \leq C_0 R_{\textrm{out}} (R_{\textrm{out}} - R_{\textrm{in}}),\]
where the constant $C_0$ depends only on the Green's potential $G$ for the torus $\mathbb{T}^2$. In this case we have a uniform bound
on $R_{\textrm{out}}$, and so the result comes from combining the above estimate with Proposition \ref{isopermineq} and using the fact
that $L \geq 2$. 
 Otherwise, if $\Omega$ is homeomorphic to $S_{n=1}$, the constant comes from multiplying the constants in Proposition \ref{isopermineqT} and \ref{equicontrolT}. 

\medskip

In the general case when $X$ is not assumed to be $C^2$ but $\kappa \in L^2(\partial \Omega)$, the result follows by setting $X_{\pm} = X_{\pm}(0) + L^{\pm} \int_0^t e^{i\theta(r)}dr$, mollifying
$\theta^{\epsilon} := \eta^{\epsilon} * \theta$ and passing to the limit in the inequality, using in particular the fact that $(\theta^{\epsilon})' \to \theta'$ in $L^2([0,1])$ and $X^{\epsilon} \to X$ uniformly, and
the fact that the mollified curve will remain star shaped since we mollify $\theta$. We
omit the details.
\section{Proof of Theorems}\label{finalproof}

\medskip

\noindent {\sc Proof of Theorem \ref{main4}:}
Let $C(L,\alpha)$ denote the constant appearing in Theorem \ref{nonexist}. We prove only the first item when $U=\mathbb{R}^2$ since the second item for $U=\mathbb{T}^2$ is argued identically. Observe that variations $\Omega \to \Omega_t$ induced by the normal velocity $\kappa - \bar \kappa$ are admissible
in the sense of Definition \ref{defcp}. Then for any convex set $ \Omega$ with $\kappa \in L^2(\partial \Omega)$ we have
\begin{align}
\frac{dE(\Omega_t)}{dt}\big|_{t=0} &= - \int_{\partial  \Omega} (\kappa - \bar \kappa)^2dS - \int_{\partial \Omega} (\kappa- \bar \kappa )(\phi_{\Omega} - \bar \phi_{\Omega})dS\\
&\leq - \int_{\partial \Omega} (\kappa - \bar \kappa)^2dS + \sqrt{C(L,\alpha)}L^{1/2}\int_{\partial \Omega}(\kappa - \bar \kappa)^2dS  \label{ineq3}\\
&\leq - (1- \bar \eta  \tilde C) \int_{\partial \Omega} (\kappa - \bar \kappa)^2dS,\label{in10}
\end{align}
whenever $\bar \eta < \bar \eta _{cr}$ with $\bar \eta _{cr}$ defined as in Theorem \ref{main4} and where $\tilde C>0$ is universal . The first line is simply the computation of the derivative of $E$ along
the flow \eqref{VPMCF} at $t=0$. Line \eqref{ineq3} follows from Theorem \ref{nonexist} along with Cauchy Schwarz. This establishes the first part of Theorem \ref{main4}, observing that \eqref{in10} implies the lower bound on $\bar \eta _{cr}$. When $\Omega$ is smooth, there exists $t \in [0,T)$ such that the above holds for all $t \in [0,T)$ by Theorem \ref{Huiskenexist2} with $T=+\infty$ when $\Omega$ is convex (cf. Theorem \ref{Huiskenexist}), establishing the second part of Theorem \ref{main4}.  \\
%This establishes immediately Theorem \ref{main3} as we've shown
%that no such set $\Omega$ can be energy critical unless it is a CMC surface. By the result of Alexandrov \cite{Alexandrov}, the only convex, compact CMC surface is the ball. In addition, integrating the above differential inequality and rescaling we conclude
%\begin{equation}
% E(\Omega) \geq E(B) + Cm\int_0^{\infty} \int_{\partial  \Omega_t}  (\kappa - \bar \kappa)^2dt dS ,
%\end{equation}
%for all convex $\Omega$ with $m < m_{cr}$. Using \eqref{areadecrease} along with \eqref{in2} establishes Theorem \ref{main2} $\Box$
\medskip

\noindent {\sc Proof of Theorem \ref{main3}}
We have in fact established Theorem \ref{main3} from the above calculations. Indeed we have shown that
\[ \frac{dE(\Omega_t)}{dt}\big|_{t=0} < 0,\]
where the map $\Omega \mapsto \Omega_t$ is an admissible variation in Definition \ref{defcp}. Therefore by definition $\Omega$ is not a critical point if it does not have constant curvature.
 Using the characterization
of compact, connected, constant curvature curves in $\mathbb{R}^2$ and simply connected constant curvature curves in $\mathbb{T}^2$ establishes the claim. $\Box$\\

\medskip
\noindent {\sc Proof of Theorem \ref{main5}}
We have shown in Theorem \ref{main3} that the only simply connected critical points on $\mathbb{T}^2$ when $\bar \gamma < \bar \gamma_{cr}$ are the ball and the stripe pattern $S_{n=1}$. By the result of Sternberg and Topaloglu \cite{sternberg} $E(\Omega) \geq E(S_{n=1})$ when $\bar \gamma < \bar \gamma_{cr}$ by possibly lowering $\bar \gamma_{cr}$. Moreover since any minimizer has a reduced boundary with $C^{3,\alpha}$ regularity \cite{sternberg}, we have from \eqref{Edecay2} that there is a $T>0$ so that
\begin{equation}
 \frac{dE(\Omega_t)}{dt} \leq -C \int_{\partial \Omega_t} (\kappa - \bar \kappa)^2 dS,
\end{equation}
for all $t \in [0,T)$ and where $C=C(\bar \gamma)>0$. Integrating and using $E(\Omega) \geq E(S_{n=1})$ when $\bar \gamma < \bar \gamma_{cr}$ we have 
\begin{equation}
 E(\Omega) \geq E(S_{n=1}) + C\int_0^{T} \int_{\partial \Omega_t} (\kappa - \bar \kappa)^2 dS =: E(S_{n=1}) + F(\Omega)
\end{equation}
where $F(\Omega)$ is defined implicitly from the above. The result holds for the non-rescaled quantity $\gamma$ since we have the a-priori
bound on $L$ coming from $L \leq \min_{\mathcal{A}_{1/2}} E$. $\Box$\\

\noindent { \sc Counter Example \ref{CE1}}\\

We consider annuli of radii $r, R > 0$ with $r < R$ and let $\Omega = \{ x : r \leq |x| \leq R\}$. Then we
can explicitly calculate $\phi_{\Omega}$, $\bar \phi_{\Omega}$, $\kappa$ and $\bar \kappa$ on $\partial \Omega$ for both
the kernels $K$ and $-\frac{1}{2\pi} \log |x|$. We consider first the logarithmic case. \\

\noindent { \emph{Case 1: $G(x,y) = -\frac{1}{2\pi} \log |x-y|$}}\\

\noindent Solving Poisson's equation $-\Delta \phi_{\Omega} = 1_{\Omega}$ explicitly in radial coordinates we obtain
\begin{align}
\phi_{\Omega}(x) &= \frac{1}{2} (R^2 - |x|^2) - \frac{1}{2} (|x|^2 - r^2) - \frac{R^2}{2} \log R + \frac{r^2}{2} \log r \textrm{ for } x \in [r,R].
\end{align}
It is easily seen that $\bar \kappa = 0$ with $\kappa = \frac{-1}{r}$ on $\partial B_r$ and $\kappa = \frac{1}{R}$ on $\partial B_R$. We first demonstrate
Counter Example \ref{CE1}. We wish to find $R > r > 0$ such that 
\[ \kappa(x) +  \phi_{\Omega}(x) = \kappa(y) +  \phi_{\Omega}(y),\]
for $x \in \partial B_r$, $y \in \partial B_R$. We claim this is the case. Indeed setting
$R=2r$ with $r=\(\frac{1}{2}\)^{1/3}$ we have $\bar \eta  := \(\frac{1}{2}\)^{1/3} (3\pi)^{1/2}(1+ \log (6\pi 2^{-1/3}))$.
Then with these choices of $r$ and $R$,
\begin{equation}
-\frac{1}{r} + \frac{1}{2} (R^2 -r^2) = \frac{1}{R} + \frac{1}{2}(r^2 - R^2).
\end{equation}
holds. Thus $\Omega$ is a solution to \eqref{ELeqn} for 
these choices of $r$ and $R$.\\
\medskip
%We now demonstrate Counter Example \ref{CE2}. After some computations one has
%\begin{align}
%\int_{\partial \Omega} (\kappa - \bar \kappa)^2 dS &= 2\pi \frac{R+r}{rR}\\
%\int_{\partial \Omega} (\phi_{\Omega} - \bar \phi_{\Omega})(\kappa - \bar \kappa) dS &= -2\pi (R^2 - r^2)
%\end{align}
%Let $\Omega_t$ denote the evolution of $\Omega$ under area-preserving curve shortening flow \eqref{VPMCF}. Then we have
%\begin{align}
%\frac{dE(\chi_{\Omega_t})}{dt}\big|_{t=0} = - \int_{\partial \Omega} (\kappa - \bar \kappa)^2 dS -  \int_{\partial \Omega} (\phi_{\Omega} - \bar \phi_{\Omega}) (\kappa - \bar \kappa) dS &= 2\pi (R+r)\( \gamma (R-r) - \frac{1}{rR}\)
%\end{align}
%Set $R= 2r$ and $r=\frac{1}{6\pi}$ as before. The last term in the above equation becomes 
%\begin{align}
% 2\pi (R+r)\( (R-r) - \frac{1}{rR}\) &= \frac{1}{\sqrt{3\pi} (6\pi)^2 rR}  \( m^{1/2}L^2 - \sqrt{3\pi}(6\pi)^2\)\\ &= \frac{1}{\sqrt{3\pi} (6\pi)^2 rR} \( \bar \eta  - \(\sqrt{3\pi}(6\pi)^2 + \sqrt{\pi}2^{-1/6} \log(6\pi 2^{-1/3})\)\).
%\end{align}
% Then when $\bar \eta  \geq \frac{(6\pi)^3}{2}+ \sqrt{\pi}2^{-1/6} \log(6\pi 2^{-1/3}))$, Theorem \ref{main4} fails to hold.  \\

\noindent { \emph{Case 2: $G(x,y) = \frac{1}{|x-y|^{\alpha}}$}}\\

\noindent In this case we have 
\begin{align}
 \phi_{\Omega}(x) &=  R^{2-\alpha} \int_{B(0,1)} \frac{d\bar y}{|x/R- \bar y|^{\alpha}} - r^{2-\alpha}\int_{B(0,1)} \frac{d\bar y}{|x/r- \bar y|^{\alpha}}\\
&= \(\int_{B(0,1)} \frac{dy}{|\tilde x- y|^{\alpha}}\) (R^{2-\alpha} - r^{2-\alpha}) \textrm{ for } \tilde x \in \partial B(0,1).
\end{align}

As in the previous case, it is seen via direct computation that $\Omega$ is a solution to \eqref{ELeqn} when $R=2r$ and 
$\bar \eta  = m^{1/2} L^{2-\alpha} = C_0$ where $C_0$ is an explicit constant, thus establishing Counter Example \ref{CE1} for $G=K$. 

%Similarly to the previous case, we conclude
%\begin{align}
%\frac{dE(\chi_{\Omega_t})}{dt}\big|_{t=0} = - \int_{\partial \Omega} (\kappa - \bar \kappa)^2 dS - \int_{\partial \Omega} (\phi_{\Omega} - \bar \phi_{\Omega}) (\kappa - \bar \kappa) dS &= C\( c \bar \eta - 1\) \geq 0,
%\end{align}
%for constants $c=c(\alpha)$, $C=C(\alpha) >0$, whenever $\bar \eta  c \geq 1$. This establishes Counter Example \ref{CE2} for $G=K$. \\

\medskip

\textbf{Acknowledgements} The author would like to thank his advisor Professor Sylvia Serfaty for her guidance and support during this work and for a careful reading of the preliminary draft. In addition the author would like to thank Professor Gerhard Huisken, Professor Robert Kohn, Christian Seis, Alexander Volkmann and Samu Alanko
 for helpful suggestions and comments and in particular Simon Masnou who explained the generalization of Gauss-Bonnet to non smooth curves. 
The author also extends his gratitude to the Max Planck Institute Golm for an invitation, which was the source of many helpful discussions.  

\end{document}